\def\BIBand{and}%
\newenvironment{breakablealgorithm}
{
	\begin{center}
		\refstepcounter{algorithm}
		\hrule height.8pt depth0pt \kern2pt
		\renewcommand{\caption}[2][\relax]{
			{\raggedright\textbf{\ALG@name~\thealgorithm} ##2\par}%
			\ifx\relax##1\relax 
			\addcontentsline{loa}{algorithm}{\protect\numberline{\thealgorithm}##2}%
			\else 
			\addcontentsline{loa}{algorithm}{\protect\numberline{\thealgorithm}##1}%
			\fi
			\kern2pt\hrule\kern2pt
		}
	}{
		\kern2pt\hrule\relax
	\end{center}
}
\newtheorem{prop}{proposition}
\begin{document}
\markboth{Authors' Names}
{Mean-Variance Portfolio Management\\with Functional Optimization (Paper's Title)}

\catchline{}{}{}{}{}

\title{Mean-Variance Portfolio Management\\with Functional Optimization}

\author{Ka Wai Tsang}
\address{\email{kwtsang@cuhk.edu.cn}}

\author{Zhaoyi He}
\address{\email{zhaoyihe@link.cuhk.edu.cn}}

\address{School of Science and Engineering, The Chinese University of Hong Kong, Shenzhen, China}
\maketitle


\begin{abstract}
This paper introduces a new functional optimization approach to portfolio optimization problems by treating the unknown weight vector as a function of past values instead of treating them as fixed unknown coefficients in the majority of studies. We first show that the optimal solution, in general, is not a constant function. We give the optimal conditions for a vector function to be the solution, and hence give the conditions for a plug-in solution (replacing the unknown mean and variance by certain estimates based on past values) to be optimal. After showing that the plug-in solutions are sub-optimal in general, we propose gradient-ascent algorithms to solve the functional optimization for mean-variance portfolio management with theorems for convergence provided. Simulations and empirical studies show that our approach can perform significantly better than the plug-in approach. 
\end{abstract}

\keywords{portfolio management; 
	functional optimization; 
	Sharpe ratio; 
	time-series; 
	resampling.}


\section{Introduction}

For a portfolio consisting of $p$ stocks (or assets) with weakly stationary log-returns $\mathbf{r}_t=(r_{1,t},\dots,r_{p,t})^T$, let $\bm{\mu}=E\mathbf{r}_t=(\mu_{1},\dots,\mu_{p})^T$ and $\mathbf{V}=E\mathbf{r}_t\mathbf{r}_t^T$ be the mean and the second moment of $\mathbf{r}_t$, and let $\mathbf{w}=(w_1,\dots,w_p)^T$ be the vector of weights of the portfolio's value invested in stock $i$, $i=1,\dots,p$, such that $\mathbf{w}^T\mathbf{1}=\sum_{i=1}^{p}w_i=1$, where $\mathbf{1}=(1,\dots,1)^T$. Suppose we have a set of past returns, $\mathbf{S}_n = \{\mathbf{r}_1,\dots,\mathbf{r}_n\}$, and we want to construct a portfolio $\mathbf{w}$ for $\mathbf{r}_{n+1}$ to maximize a certain given objective function $F$ consisting of expected return $E(\mathbf{w}^T\mathbf{r}_{n+1})$ and variance $E((\mathbf{w}^T\mathbf{r}_{n+1})^2)-E(\mathbf{w}^T\mathbf{r}_{n+1})^2$. That is, we want to solve the problem
\begin{equation}\label{eq1}
\max_{\mathbf{w}\in\Omega} F(E(\mathbf{w}^T\mathbf{r}_{n+1}),E((\mathbf{w}^T\mathbf{r}_{n+1})^2)),
\end{equation}
for some continuously differentiable $C^1$ function $F:\mathbb{R}^2\rightarrow\mathbb{R}$, and $\Omega$ is a functional space that contains functions $\mathbf{w}:\mathbf{S}_n\rightarrow\mathbb{R}^P$ satisfying certain constrains (e.g., $\mathbf{w}^T\mathbf{1}=1$). The main stream in the literature assumes $\mathbf{w}$ to be a constant vector. In such case, $\Omega$ is a vector space and the problem \eqref{eq1} is reduced to
\begin{equation}\label{eq2}
\max_{\mathbf{w}\in\Omega} F(\mathbf{w}^T\bm{\mu},\mathbf{w}^T\mathbf{V}\mathbf{w}).
\end{equation}
For  $F(\mathbf{w}^T\bm{\mu},\mathbf{w}^T\mathbf{V}\mathbf{w})= -\mathbf{w}^T\mathbf{V}\mathbf{w}+(\mathbf{w}^T\bm{\mu})^2=-\mathbf{w}^T\bm{\Sigma}\mathbf{w}$, where $\bm{\Sigma}$ is the covariance matrix of $\mathbf{r}_t$, and $\Omega=\{\mathbf{w}:\mathbf{w}^T\bm{\mu}=\mu_*,\mathbf{w}^T\mathbf{1}=1,\mathbf{w}\geq\mathbf{0}\}$, where $\mu_*$ is a given target value for the mean return of a portfolio, and $\mathbf{w}\geq\mathbf{0}$ means $w_i\geq0$ for $i=1,\dots,p$, problem \eqref{eq2} is the mean-variance portfolio optimization considered by \citet{markowitz52,markowitz59}. The theory developed from \eqref{eq2}, for which Harry Markowitz received the 1990 Nobel Price in Economics, provided the first systematic treatment of a dilemma of getting high profit versus reducing risk (measured by variance). The constraint $\mathbf{w}^T\bm{\mu}=\mu_*$ in Markowitz's mean-variance formulation can be included in the objective function by using a Lagrange multiplier $\lambda^{-1}$. The parameter $\lambda>0$ can be regarded as a risk aversion coefficient, which may be difficult to determine in practice. Alternatively, investors often prefer to construct portfolio to maximize the information ratio $(\mathbf{w}^T\bm{\mu}-\mu_0)/\sigma_e$, where $\mu_0$ is the expected return of a benchmark investment (e.g., S\&P500) and $\sigma_e^2=Var(\mathbf{w}^T\mathbf{r}_{n+1}-r_{0,n+1})$, where $r_{0,n+1}$ is the return of the benchmark, is the variance of the portfolio's excess return over the benchmark investment; see \citet{grinold2000active}. If the benchmark investment is putting money in a risk-free bank account with interest rate $\mu_0$, it is called the Sharpe ratio. If $r_{0,n+1}=\mu_0$ is assumed to be nonrandom, then $\sigma_e^2=Var(\mathbf{w}^T\mathbf{r}_{n+1})=\mathbf{w}^T\bm{\Sigma}\mathbf{w}$ and the objective function for maximizing the Sharpe ratio is $F(\mathbf{w}^T\bm{\mu},\mathbf{w}^T\mathbf{V}\mathbf{w})=(\mathbf{w}^T\bm{\mu}-\mu_0)/\sqrt{\mathbf{w}^T\mathbf{V}\mathbf{w}-(\mathbf{w}^T\bm{\mu})^2}$. Considering the first and second moment of the portfolio return for portfolio construction can be justified by normal distribution assumption of $\mathbf{r}_t$. The well-known Black-Scholes model for stock price $S_{i,t}$ of the $i^{th}$ stock at time $t$ assumes $dS_{i,t}/S_{i,t}=\theta_i dt+\sigma_i d B_t^{(i)}$, where $\{B_t^{(i)},t\geq0\}$ is standard Brownian motion. It implies that $S_{i,t+1} = S_{i,t} exp(\theta_i-\sigma_i^2/2+\sigma_i(B_{t+1}-B_t))$, and hence the log-returns $r_{i,t}=log(S_{i,t}/S_{i,t-1})$ follows      $\mathcal{N}(\theta_i-\sigma_i^2/2,\sigma_i^2)$ distributions independently. Therefore the portfolio return $\mathbf{w}^T\mathbf{r}_{n+1}=\sum_{i=1}^{n}w_ir_{i,n+1}$ is also normal, which can be fully described by its first and second moments, if $\mathbf{w}$ is a constant vector. Under normal assumption, other common risk measures like $100(1-\alpha)\%$  Value at Risk (VaR) of a long position, $VaR_L(\alpha)=\inf\{x:P(\mathbf{w}^T\mathbf{r}_{n+1}\leq x)\geq \alpha\}=\mathbf{w}^T\bm{\mu}-z_{1-\alpha}\sqrt{\mathbf{w}^T\bm{\Sigma}\mathbf{w}}$, and $100(1-\alpha)\%$ expected shortfall (ES) of a long position, $ES_L(\alpha)=-E(\mathbf{w}^T\mathbf{r}_{n+1}\mid\mathbf{w}^T\mathbf{r}_{n+1}\leq VaR_L(\alpha))=-\mathbf{w}^T\bm{\mu}+\phi(z_{1-\alpha})\sqrt{\mathbf{w}^T\bm{\Sigma}\mathbf{w}}/\alpha$, where $z_{1-\alpha}$ is the $(1-\alpha)^{th}$ quantile of $\mathcal{N}(0,1)$ and $\phi$ is the density function of standard normal, can also be expressed as a function in the form of  $F(\mathbf{w}^T\bm{\mu},\mathbf{w}^T\mathbf{V}\mathbf{w})$; see Section 12.2.1 in \citet{lai2008statistical}. Other risk measures that only consider the first and second moments of return distributions can be found in \citet{steinbach2001markowitz}.

In practice, $\bm{\mu}$, $\mathbf{V}$ and $\bm{\Sigma}$ are unknown, and thus problem \eqref{eq2} cannot be solved directly. One commonly used approach is to estimate $\bm{\mu}$ and $\bm{\Sigma}$ from the historical data $\mathbf{S}_n$. By assuming $\mathbf{r}_t$ are weakly stationary, it is natural to estimate $\bm{\mu}$ and $\bm{\Sigma}$ by the sample mean $\hat{\bm{\mu}}=\frac{1}{n}\sum_{t=1}^{n}\mathbf{r}_t$ and sample covariance $\hat{\bm{\Sigma}}=\frac{1}{n}\sum_{t=1}^{n}(\mathbf{r}_t-\hat{\bm{\mu}})(\mathbf{r}_t-\hat{\bm{\mu}})^T$. However, such $\bm{\Sigma}$ estimate can be very poor if the number of stocks in the portfolio is large. \citet{frankfurter1976performance} and \citet{jobson1980estimation} have found that portfolios constructed by solving \eqref{eq2} with $\bm{\mu}$ and $\bm{\Sigma}$ replaced by $\hat{\bm{\mu}}$ and $\hat{\bm{\Sigma}}$ can perform worse than the equally weighted portfolio. Better estimate can be achieved if there is extra information about $\mathbf{r}_t$. If we assume multifactor models that relate the $p$ stock returns $r_{i,t}$ to $k$ factors $f_{1,t},\dots,f_{k,t}$ in a regression model of the form $r_{i,t}= \alpha_i +(f_{1,t},\dots,f_{k,t})^T\bm{\beta}_i+\varepsilon_{i,t}$, where $\alpha_i$ and $\bm{\beta}_i$ are unknown coefficients and $\varepsilon_{i,t}$ is an error term with mean 0 and is uncorrelated with the factors, then $\bm{\Sigma}$ can be decomposed into two parts: a systematic part due to the variability of certain unobserved factors, and an idiosyncratic part coming from the errors $\varepsilon_{i,t}$, and the number of parameters to estimate can be greatly reduced for small $k$ comparing with $O(p^2)$ for $\hat{\bm{\Sigma}}$. Starting from the studies of \citet{sharpe1964capital} and \citet{lintner1965valuation} that developed the capital asset pricing model (CAPM), which is a single-factor ($k=1$) model using the difference between the return of a hypothetical market portfolio (approximated by an index fund such as S\&P500 in practice) and the risk-free interest rate, various multifactor models have been proposed; see \citet{chen1986economic}, \citet{fama1993common}, \citet{carhart1997persistence}, \citet{cooper2008asset} and \citet{artmann2012determinants}. Section 3.4.3 of \citet{lai2008statistical} also suggest that the factors can be estimated by principal component analysis (PCA) from the past values $\mathbf{S}_n$. In such case, the estimated $\bm{\mu}$ and $\bm{\Sigma}$ can be considered as functions of $\mathbf{S}_n$.

Let $\hat{\bm{\Sigma}}^S$ be the sample covariance based on $\mathbf{S}_n$ and $\hat{\bm{\Sigma}}^F$ be the covariance estimate based on a multifactor model. $\hat{\bm{\Sigma}}^F$ is a better estimate than $\hat{\bm{\Sigma}}^S$ if the assumed multifactor model is a good model for $\mathbf{r}_t$, and can be worse than $\hat{\bm{\Sigma}}^S$ if the model describes $\mathbf{r}_t$ poorly. To have a more robust estimate, \citet{ledoit2003improved, ledoit2004honey}  propose to estimate $\bm{\Sigma}$ by a convex combination of $\hat{\bm{\Sigma}}^S$ and $\hat{\bm{\Sigma}}^F$, $\hat{\bm{\Sigma}}^{LW}=\hat{\delta}\hat{\bm{\Sigma}}^F+(1-\hat{\delta})\hat{\bm{\Sigma}}^S$, where $\hat{\delta}$ is a constant giving more weight to $\hat{\bm{\Sigma}}^S$ if the number of samples is large. Such shinkage estimate idea is consistent with the Bayes estimators with conjugate family of prior distributions. \citet{black1990asset} propose a quasi-Bayesian approach, which involves investor's subjective insights, to estimate $\bm{\mu}$. It is discussed in greater detail in \citet{bevan98usingthe} and \citet{he1999intuition}.
Following their ideas, further modifications have been proposed in \citet{meucci2005risk}, \citet{fabozzi2007robust}, and \citet{meucci2010black}.

While the vast majority of studies in solving problem \eqref{eq1} have focused on constant $\mathbf{w}$, which leads to problem formulation \eqref{eq2}, \citet{lai2011mean} points out that ``the construction of efficient portfolios when $\bm{\mu}$ and $\bm{\Sigma}$ are unknown is more complicated than trying to estimate them as well as possible and then plugging the estimates into \eqref{eq2}." They notice that the set $\mathbf{S}_n$ of past returns are actually random, and thus the estimate $\hat{\bm{\mu}}$ and $\hat{\bm{\Sigma}}$ that depend on $\mathbf{S}_n$ are also random. Therefore the weight vector $\mathbf{w}$ should be considered as a random vector. For example, for \eqref{eq2} with $F(\mathbf{w}^T\bm{\mu},\mathbf{w}^T\mathbf{V}\mathbf{w})=-\mathbf{w}^T\bm{\Sigma}\mathbf{w}= -\mathbf{w}^T\mathbf{V}\mathbf{w}+(\mathbf{w}^T\bm{\mu})^2$, and $\Omega=\{\mathbf{w}^T\bm{\mu}=\mu_*,\mathbf{w}^T\mathbf{1}=1\}$, the solution is $\mathbf{w}_0=\{B\bm{\Sigma}^{-1}\mathbf{1}-A\bm{\Sigma}^{-1}\bm{\mu}+\mu_*(C\bm{\Sigma}^{-1}\bm{\mu}-A\bm{\Sigma}^{-1}\mathbf{1})\}/D$, where $A=\bm{\mu}^T\bm{\Sigma}^{-1}\mathbf{1}$, $B=\bm{\mu}^T\bm{\Sigma}^{-1}\bm{\mu}$, $C=\mathbf{1}^T\bm{\Sigma}^{-1}\mathbf{1}$, and $D=BC-A^2$. If $\hat{\bm{\mu}}=\hat{\bm{\mu}}(\mathbf{S}_n)$ and $\hat{\bm{\Sigma}}=\hat{\bm{\Sigma}} (\mathbf{S}_n)=\hat{\mathbf{V}}(\mathbf{S}_n)-\hat{\bm{\mu}}(\mathbf{S}_n)\hat{\bm{\mu}}(\mathbf{S}_n)^T$ are used to replace $\bm{\mu}$ and $\bm{\Sigma}$ in $\mathbf{w}_0$, then $\mathbf{w}_0$ will be a function of $\mathbf{S}_n$ and hence $\mathbf{w}_0$ is random. The randomness of $\mathbf{w}$ makes problem \eqref{eq1} is not equivalent to \eqref{eq2}. Suppose $\mathbf{r}_t$ are i.i.d. with mean $\bm{\mu}$ and $\bm{\Sigma}$. By the law of iterated conditional expectations, $Var(\mathbf{w}^T\mathbf{r}_{n+1}) = E[Var(\mathbf{w}^T\mathbf{r}_{n+1}\mid\mathbf{S}_n)] + Var[E(\mathbf{w}^T\mathbf{r}_{n+1}\mid\mathbf{S}_n)]=E(\mathbf{w}^T\bm{\Sigma}\mathbf{w})+Var(\mathbf{w}^T\bm{\mu})$. While $\mathbf{w}^T\bm{\Sigma}\mathbf{w}=\mathbf{w}^T(\mathbf{V}-\bm{\mu}\bm{\mu}^T)\mathbf{w}$ for constant $\mathbf{w}$ in \eqref{eq2} can be considered as an estimate of $E(\mathbf{w}^T\bm{\Sigma}\mathbf{w})$ using the observed $\mathbf{S}_n$, the term $Var(\mathbf{w}^T\bm{\mu})$, which is nonzero if $\mathbf{w}$ is random, is omitted in \eqref{eq2}. \citet{lai2011mean} suggest that "this omission is an important root cause for the Markowitz optimization enigma related to 'plug-in' efficient frontiers". They consider problem formulation \eqref{eq1} with $F(E(\mathbf{w}^T\mathbf{r}_{n+1}),E((\mathbf{w}^T\mathbf{r}_{n+1})^2)) = E(\mathbf{w}^T\mathbf{r}_{n+1})-\lambda( E((\mathbf{w}^T\mathbf{r}_{n+1})^2)-(E(\mathbf{w}^T\mathbf{r}_{n+1}))^2)$ for some $\lambda > 0$, and treat $\bm{\mu}$ and $\bm{\Sigma}$ as state variables whose uncertainties are specified by their posterior distributions given the observations $\mathbf{S}_n$ in a Bayesian framework. Since problem \eqref{eq1} is not a standard stochastic optimization problem due to the higher order term $(E(\mathbf{w}^T\mathbf{r}_{n+1}))^2$, they first convert \eqref{eq1} to a standard stochastic optimization problem by showing that the maximizer of $E(\mathbf{w}^T\mathbf{r}_{n+1})-\lambda( E((\mathbf{w}^T\mathbf{r}_{n+1})^2)-(E(\mathbf{w}^T\mathbf{r}_{n+1}))^2)$ is the minimizer of $\lambda \mathbf{w}^T\mathbf{V}_n\mathbf{w} - \eta\mathbf{w}^T\bm{\mu}_n$ for some constant $\eta$, where $\bm{\mu}_n = E(\mathbf{r}_{n+1}\mid\mathbf{S}_n) $ and $\mathbf{V}_n = E(\mathbf{r}_{n+1}\mathbf{r}_{n+1}^T\mid\mathbf{S}_n)$ are the posterior mean and second moment matrix given $\mathbf{S}_n$. Without specifying particular prior distributions for $\bm{\mu}$ and $\bm{\Sigma}$, they propose using the empirical distribution of $\mathbf{S}_n=\{\mathbf{r}_1,\dots,\mathbf{r}_n\}$ to be the common distribution of the returns and setting $\bm{\mu}_n=\frac{1}{n}\sum_{t=1}^{n}\mathbf{r}_t$ and $\mathbf{V}_n=\frac{1}{n}\sum_{t=1}^{n}\mathbf{r}_t\mathbf{r}_t^T$. They call this approach for solving \eqref{eq1} as nonparametrical empirical Bayes (NPEB) approach.

In this paper, we agree the claim in \citet{lai2011mean} that the solution in \eqref{eq1} is not a fixed vector in general and treat $\mathbf{w}$ as a function of past returns $\mathbf{S}_n$. While \citet{lai2011mean} point out that the randomness of $\mathbf{w}$ comes from parameter estimation based on $\mathbf{S}_n$, Section \ref{sec2} show that the solution of \eqref{eq1} is not a constant function in general even if $\bm{\mu}$ and $\bm{\Sigma}$ are known. Functional optimization approaches are introduced to solve \eqref{eq1} for the case of $\Omega=\{\mathbf{w}:\mathbf{w}^T\mathbf{1}=1\}$ in Section \ref{sec3} and for the case of general closed and convex $\Omega$ in Section \ref{Omega}. Section \ref{sec5} gives the details of implementation of our proposed algorithms in practice. Simulations and an empirical study are given in Sections \ref{sec6} and \ref{sec7} to illustrate the performance of our algorithms. 

\section{Functional weighting maximizer}\label{sec2}
We will first focus on $\Omega = \{\mathbf{w}:\mathbf{w}^T\mathbf{1}=1\}$, which is corresponding to the case that short-selling is unlimitedly allowed. The case for more general constraint set $\Omega$ will be discussed in Section \ref{Omega}. Although $\bm{\mu}$ and $\bm{\Sigma}$ are commonly estimated under the assumption that $\mathbf{r}_t$ are independent, this assumption is overly restrictive as \citet{engle1986modelling} have noted volatility clustering and strong autocorrelations in the time series of squared returns, leading them to develop the ARCH and GARCH models in the 1980s. Therefore, we assume the distribution of $\mathbf{r}_{n+1}$ depends on its past values $\mathbf{S}_n$. Let $\bm{\mu}_n(\mathbf{S}_n)=E(\mathbf{r}_{n+1}\mid\mathbf{S}_n)$ and $\mathbf{V}_n(\mathbf{S}_n)=E(\mathbf{r}_{n+1}\mathbf{r}_{n+1}^T\mid\mathbf{S}_n)$ be the conditional mean and second moment of $\mathbf{r}_{n+1}$ respectively, then $E(\mathbf{w}^T\mathbf{r}_{n+1})=E(\mathbf{w}^T\bm{\mu}_n)$ and $E((\mathbf{w}^T\mathbf{r}_{n+1})^2)=E(\mathbf{w}^T\mathbf{V}_n\mathbf{w})$ as we assume $\mathbf{w}$ is a function of $\mathbf{S}_n$. Let $G(\mathbf{w})=F(E(\mathbf{w}^T\bm{\mu}_n),E(\mathbf{w}^T\mathbf{V}_n\mathbf{w}))$ and $\nabla G(\mathbf{w})=\nabla F(E(\mathbf{w}^T\bm{\mu}_n),E(\mathbf{w}^T\mathbf{V}_n\mathbf{w}))$ to simplify notations.

\subsection{Constant solution for independent returns}
If $\mathbf{r}_t$ are i.i.d.\ with mean $\bm{\mu}$ and covariance $\bm{\Sigma}$, then $\bm{\mu}_n(\mathbf{S}_n)=\bm{\mu}$ and $\mathbf{V}_n(\mathbf{S}_n)=\mathbf{V}=\bm{\Sigma}+\bm{\mu}\bm{\mu}^T$. Let $\mathbf{w}_B$ be the solution of \eqref{eq1}. If $\mathbf{w}_B$ is not a constant, let $\mathbf{w}_{B^\prime}=E(\mathbf{w}_B(\mathbf{S}_n))$, then we have $E(\mathbf{w}_{B^\prime}^T\bm{\mu})=E(\mathbf{w}_B^T\bm{\mu})$ and $E(\mathbf{w}_{B^\prime}^T\mathbf{V}\mathbf{w}_{B^\prime})-E(\mathbf{w}_B^T\mathbf{V}\mathbf{w}_B) =$ $ -E[(\mathbf{w}_B-\mathbf{w}_{B^\prime})^T\mathbf{V}(\mathbf{w}_B-\mathbf{w}_{B^\prime})]\leq 0 $, which implies that 
$Var(\mathbf{w}_{B^\prime}^T\mathbf{r}_{n+1})\leq Var(\mathbf{w}_B^T\mathbf{r}_{n+1})$. Therefore, if the value of the objective function $F$ increases when the expected return increases or the variance decreases, then we have $G(\mathbf{w}_{B^\prime})\geq G(\mathbf{w}_B)$ and hence $\mathbf{w}_B$ is a constant vector. This result, which is summarized in Lemma \ref{lem1}, justifies the use of problem formulation \eqref{eq2} for solving \eqref{eq1} when the returns are i.i.d. 

\begin{lemma}\label{lem1}
	If $\mathbf{r}_t$ are independent with mean $\bm{\mu}$ and covariance $\bm{\Sigma}$, and the objective function F in \eqref{eq1} is increasing in the first argument and is decreasing in the second argument, then the solution to the problem \eqref{eq1} is a constant vector function.
\end{lemma}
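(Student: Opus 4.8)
The plan is to formalize the averaging argument already outlined in the paragraph preceding the statement. The key structural fact is that under the i.i.d.\ assumption the conditional moments degenerate to constants, $\bm{\mu}_n(\mathbf{S}_n)=\bm{\mu}$ and $\mathbf{V}_n(\mathbf{S}_n)=\mathbf{V}=\bm{\Sigma}+\bm{\mu}\bm{\mu}^T$, so that $G(\mathbf{w})=F(E(\mathbf{w}^T\bm{\mu}),E(\mathbf{w}^T\mathbf{V}\mathbf{w}))$ with $\bm{\mu}$ and $\mathbf{V}$ no longer random. First I would take an arbitrary maximizer $\mathbf{w}_B$ of \eqref{eq1} over $\Omega$ and introduce its expectation $\mathbf{w}_{B'}:=E(\mathbf{w}_B(\mathbf{S}_n))$, a deterministic vector. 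The goal is to show $\mathbf{w}_{B'}$ does at least as well as $\mathbf{w}_B$, so that a constant maximizer necessarily exists.

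Second, I would verify that $\mathbf{w}_{B'}$ is feasible and preserves the first argument of $F$. Feasibility follows from linearity of expectation: $\mathbf{w}_{B'}^T\mathbf{1}=E(\mathbf{w}_B^T\mathbf{1})=1$, so $\mathbf{w}_{B'}\in\Omega$. Since $\bm{\mu}$ is now a constant, the same linearity gives $E(\mathbf{w}_{B'}^T\bm{\mu})=\mathbf{w}_{B'}^T\bm{\mu}=E(\mathbf{w}_B)^T\bm{\mu}=E(\mathbf{w}_B^T\bm{\mu})$, so the expected return entering $F$ is unchanged when we pass from $\mathbf{w}_B$ to $\mathbf{w}_{B'}$.

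Third, and this is the crux, I would control the second argument via a bias/variance (Jensen) decomposition. Writing $\mathbf{w}_B=\mathbf{w}_{B'}+(\mathbf{w}_B-\mathbf{w}_{B'})$ and expanding the quadratic form, the cross term vanishes because $E(\mathbf{w}_B-\mathbf{w}_{B'})=\mathbf{0}$, yielding $E(\mathbf{w}_B^T\mathbf{V}\mathbf{w}_B)=\mathbf{w}_{B'}^T\mathbf{V}\mathbf{w}_{B'}+E[(\mathbf{w}_B-\mathbf{w}_{B'})^T\mathbf{V}(\mathbf{w}_B-\mathbf{w}_{B'})]$. Because $\mathbf{V}$ is a second-moment matrix, hence positive semidefinite, the residual term is nonnegative, so $E(\mathbf{w}_{B'}^T\mathbf{V}\mathbf{w}_{B'})\le E(\mathbf{w}_B^T\mathbf{V}\mathbf{w}_B)$. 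Monotonicity of $F$ then closes the argument: the first argument is equal and the second argument is no larger for $\mathbf{w}_{B'}$, so $G(\mathbf{w}_{B'})\ge G(\mathbf{w}_B)$, and optimality of $\mathbf{w}_B$ forces equality, exhibiting a constant maximizer.

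The main obstacle is not the algebra but pinning down precisely what ``the solution is a constant'' should mean. The averaging step as stated only guarantees the \emph{existence} of a constant optimizer; to conclude that every optimizer is constant one needs strict monotonicity of $F$ in its second argument together with positive definiteness of $\mathbf{V}$, which makes the residual term strictly positive whenever $\mathbf{w}_B$ is genuinely random and thereby contradicts optimality. I would also flag the minor regularity requirement that $\mathbf{w}_B$ be integrable, so that $\mathbf{w}_{B'}$ is well defined; this is automatic once the objective value $E(\mathbf{w}_B^T\mathbf{V}\mathbf{w}_B)$ is assumed finite, since positive semidefiniteness of $\mathbf{V}$ then bounds the relevant moments.
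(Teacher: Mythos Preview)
Your proposal is correct and follows essentially the same averaging argument the paper gives in the paragraph preceding Lemma~\ref{lem1}: replace a putative random optimizer $\mathbf{w}_B$ by its expectation $\mathbf{w}_{B'}$, observe that the first argument of $F$ is unchanged while the second can only decrease by the quadratic-form decomposition $E(\mathbf{w}_B^T\mathbf{V}\mathbf{w}_B)=\mathbf{w}_{B'}^T\mathbf{V}\mathbf{w}_{B'}+E[(\mathbf{w}_B-\mathbf{w}_{B'})^T\mathbf{V}(\mathbf{w}_B-\mathbf{w}_{B'})]$, and invoke monotonicity. Your added remarks on feasibility, integrability, and the strict-monotonicity/positive-definiteness needed to upgrade ``a constant optimizer exists'' to ``every optimizer is constant'' are welcome clarifications that the paper leaves implicit.
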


\subsection{Optimal condition for the solution}
Lemma \ref{lem1} supports plug-in approach to solve \eqref{eq1} (by solving \eqref{eq2}) when there are good estimates of $\bm{\mu}$ and $\bm{\Sigma}$ and $\mathbf{r}_t$ are independent. However it does not justify the use of plug-in approach when $\mathbf{r}_t$ is just weakly stationary. Let $\hat{\mathbf{w}}$ be the solution of \eqref{eq2} with $\bm{\mu}$ and $\mathbf{V}$ replaced by $\hat{\bm{\mu}}$ and $\hat{\mathbf{V}}$. Actually, although the formulation \eqref{eq2} is based on the assumption that $\mathbf{w}$ is constant, the fact that $\hat{\bm{\mu}}$ and $\hat{\mathbf{V}}$ are computed based on given $\mathbf{S}_n$ implies that $\hat{\bm{\mu}}$, $\hat{\mathbf{V}}$, and hence $\hat{\mathbf{w}}$ are functions of $\mathbf{S}_n$. We now derive the condition for the optimal solution for \eqref{eq1} and check when $\hat{\mathbf{w}}$ is equal to or close to the optimal solution.

For any $\mathbf{w}\in{\Omega}$ and $\bm{\delta}=\bm{\delta}(\mathbf{S}_n)$ satisfying $\bm{\delta}^T\mathbf{1}=0$, $\mathbf{w}+\bm{\delta}\in\Omega$. By Taylor expansion, we have 
\begin{equation}\label{eq4}
\begin{aligned}
G(\mathbf{w}+t\bm{\delta}) = G(\mathbf{w})+t(E(\bm{\delta}^T\bm{\mu}_n),2E(\mathbf{w}^T\mathbf{V}_n\bm{\delta}))\nabla G(\mathbf{w})+O(t^2),
\end{aligned}
\end{equation}
for small $t>0$. Let $\mathbf{w^*}$ be the solution of \eqref{eq1}, then we have $\lim\limits_{t\rightarrow 0^+}(G(\mathbf{w^*}+t\bm{\delta})-G(\mathbf{w^*}))/t\leq 0$, which implies $(E(\bm{\delta}^T\bm{\mu}_n), 2E(\bm{\delta}^T\mathbf{V}_n\mathbf{w^*}))\nabla G(\mathbf{w^*})\leq 0$. Since it holds for all $\bm{\delta}$ with $\bm{\delta}^T\mathbf{1}=0$, by replacing $\bm{\delta}$ by $- \bm{\delta}$, we get $E[\nabla G(\mathbf{w^*})^T(\bm{\mu}_n,$ $ 2\mathbf{V}_n\mathbf{w^*})^T\bm{\delta}]=0$. Note that $\nabla G(\mathbf{w^*})$ can be included in the expectation because $\nabla G(\mathbf{w^*})=\nabla F(E(\mathbf{w^*}^T\bm{\mu}_n), E(\mathbf{w^*}^T\mathbf{V}_n\mathbf{w^*})) $ $\in \mathbf{R}^2$ is a constant vector that does not depend on $\mathbf{S}_n$. On the other hand, if $F$ is a concave function around the point $p^*=(E((\mathbf{w^*})^T\bm{\mu}_n), E((\mathbf{w^*})^T\mathbf{V}_n\mathbf{w^*}))$ so that the Hessian matrix of $F$ is negative definite around $p^*$, then we have $G(\mathbf{w^*}+t\bm{\delta}) \leq G(\mathbf{w^*})+\nabla G(\mathbf{w^*})^T(0, t^2E(\bm{\delta}^T\mathbf{V}_n\bm{\delta}))$ if $E[\nabla G(\mathbf{w^*})^T(\bm{\mu}_n, 2\mathbf{V}_n\mathbf{w^*})^T\bm{\delta}]=0$ for all $\bm{\delta}$ satisfying $\bm{\delta}^T\mathbf{1}=0$. It then implies that $\lim\limits_{t\rightarrow 0^+} (G(\mathbf{w^*}+t\bm{\delta})-G(\mathbf{w^*}))/t\leq 0$ and the results in Lemma \ref{lem2}.
\begin{lemma}\label{lem2}
	The necessary condition for a function $\mathbf{w^*}=\mathbf{w^*}(\mathbf{S}_n)$ to be the solution of \eqref{eq1} is\begin{align*}
	\nabla G(\mathbf{w^*})^T E[(\bm{\mu}_n, 2\mathbf{V}_n\mathbf{w^*})^T\bm{\delta}]=0\quad \forall\bm{\delta} \in \{\bm{\delta}:\bm{\delta}^T\mathbf{1}=0\}.
	\end{align*}
	If F is a concave function around $p^*$, the condition is also sufficient.
	
\end{lemma}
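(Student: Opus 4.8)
The plan is to derive both implications from the single first-variation computation already recorded in \eqref{eq4}, taken along a feasible perturbation $\bm{\delta}=\bm{\delta}(\mathbf{S}_n)$ with $\bm{\delta}^T\mathbf{1}=0$, for which $(\mathbf{w^*}+t\bm{\delta})^T\mathbf{1}=1$ and hence $\mathbf{w^*}+t\bm{\delta}\in\Omega$ for every $t$. First I would note two facts that will be used repeatedly: because $\mathbf{V}_n$ is symmetric, the first-order coefficient in \eqref{eq4} may be written as the row vector $(E(\bm{\delta}^T\bm{\mu}_n),\,2E(\bm{\delta}^T\mathbf{V}_n\mathbf{w^*}))$, and $\nabla G(\mathbf{w^*})=\nabla F(p^*)\in\mathbb{R}^2$ is a deterministic constant, so it commutes with the expectation operator. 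These are exactly the structural observations that let the inner product $\nabla G(\mathbf{w^*})^T(\cdot)$ be moved inside or outside $E[\cdot]$ at will.

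For the necessary direction, optimality of $\mathbf{w^*}$ forces $G(\mathbf{w^*}+t\bm{\delta})\le G(\mathbf{w^*})$ for all small $t>0$, whence $\lim_{t\to0^+}(G(\mathbf{w^*}+t\bm{\delta})-G(\mathbf{w^*}))/t\le0$; substituting \eqref{eq4} turns this into $\nabla G(\mathbf{w^*})^T\,(E(\bm{\delta}^T\bm{\mu}_n),\,2E(\bm{\delta}^T\mathbf{V}_n\mathbf{w^*}))^T\le0$. Since $\{\bm{\delta}:\bm{\delta}^T\mathbf{1}=0\}$ is a linear space, applying the same inequality to $-\bm{\delta}$ yields the reverse inequality, so the directional derivative vanishes; pulling the constant $\nabla G(\mathbf{w^*})$ back inside $E[\cdot]$ produces precisely the stated condition.

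For sufficiency under concavity, I would invoke the supporting-hyperplane inequality $F(q)\le F(p^*)+\nabla F(p^*)^T(q-p^*)$ with $q$ equal to the pair $(E(\,\cdot^T\bm{\mu}_n),E(\,\cdot^T\mathbf{V}_n\,\cdot\,))$ evaluated at $\mathbf{w^*}+t\bm{\delta}$. The increment $q-p^*$ has first coordinate $tE(\bm{\delta}^T\bm{\mu}_n)$ and second coordinate $2tE(\bm{\delta}^T\mathbf{V}_n\mathbf{w^*})+t^2E(\bm{\delta}^T\mathbf{V}_n\bm{\delta})$, so the inequality becomes $G(\mathbf{w^*}+t\bm{\delta})\le G(\mathbf{w^*})+t\,[\,\text{first-order quantity}\,]+t^2\,\nabla G(\mathbf{w^*})^T(0,E(\bm{\delta}^T\mathbf{V}_n\bm{\delta}))^T$. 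The bracketed first-order quantity is exactly what the hypothesis sets to zero, and $E(\bm{\delta}^T\mathbf{V}_n\bm{\delta})\ge0$ since $\mathbf{V}_n\succeq0$; once $F$ is decreasing in its second argument (the standing mean--variance hypothesis, as in Lemma \ref{lem1}, giving a nonpositive second partial), the residual term is $\le0$ and $G(\mathbf{w^*}+t\bm{\delta})\le G(\mathbf{w^*})$. As every point of $\Omega$ has the form $\mathbf{w^*}+t\bm{\delta}$ with $\bm{\delta}^T\mathbf{1}=0$ and $t=1$, this delivers global optimality.

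The hard part will be making the first variation rigorous in the infinite-dimensional setting: I must justify interchanging the expectation with the $t$-derivative and controlling the $O(t^2)$ remainder in \eqref{eq4} uniformly, which requires integrability of $\bm{\mu}_n$, $\mathbf{V}_n$, $\bm{\delta}$ and $\mathbf{w^*}$ (for instance through a dominated-convergence argument). A second subtlety is the gap between ``directional derivative $\le0$ in every feasible direction'' and genuine global maximality: this needs the supporting-hyperplane inequality to hold for the full increment at $t=1$, i.e.\ concavity of $F$ on the whole relevant range rather than only at $p^*$, together with the sign of the second partial of $F$. I would be explicit that the clean ``also sufficient'' claim leans on the monotonicity of $F$ in the variance argument, not on concavity alone.
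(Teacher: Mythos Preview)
Your proposal is correct and follows essentially the same route as the paper: the necessary direction via the first-variation formula \eqref{eq4} combined with the $\pm\bm{\delta}$ symmetry of the linear feasible cone, and the sufficient direction via the concavity (supporting-hyperplane) inequality applied to $F$ at $p^*$. You are in fact more careful than the paper on two points it leaves implicit---the need for the second partial of $F$ to be nonpositive so that the residual $t^2$ term has the right sign, and the passage from vanishing directional derivative to genuine optimality at $t=1$---so the caveats you flag are well placed rather than gaps.
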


By the method of Lagrange multiplier, the solution $\hat{\mathbf{w}}=\hat{\mathbf{w}}(\mathbf{S}_n)$ of \eqref{eq2} with $\Omega=\{\mathbf{w}:\mathbf{w}^T\mathbf{1}=1\}$ satisfies $
(\hat{\bm{\mu}},2\hat{\mathbf{V}}\hat{\mathbf{w}})\nabla F(\hat{\mathbf{w}}^T\hat{\bm{\mu}}, \hat{\mathbf{w}}^T\hat{\mathbf{V}}\hat{\mathbf{w}})-\hat{\lambda}\mathbf{1}=0$, $\forall\mathbf{S}_n$,
where $\hat{\lambda} = \nabla F(\hat{\mathbf{w}}^T\hat{\bm{\mu}},\hat{\mathbf{w}}^T\hat{\mathbf{V}}\hat{\mathbf{w}})^T(\mathbf{1}^T\hat{\mathbf{V}}^{-1}\hat{\bm{\mu}},2)/\mathbf{1}^T\hat{\mathbf{V}}^{-1}\mathbf{1}$. Take $\hat{\bm{\mu}}(\mathbf{S}_n)=\bm{\mu}_n(\mathbf{S}_n)$ and $\hat{\mathbf{V}}(\mathbf{S}_n)=\mathbf{V}_n(\mathbf{S}_n)$, it implies that plug-in solution $\hat{\mathbf{w}}(\mathbf{S}_n)$ satisfies \begin{equation}\label{eq6}
\begin{aligned}
E[\nabla F(\hat{\mathbf{w}}^T\bm{\mu}_n,
\hat{\mathbf{w}}^T\mathbf{V}_n\hat{\mathbf{w}})^T (\bm{\mu}_n, 2\mathbf{V}_n\hat{\mathbf{w}})^T\bm{\delta}]&=0
\end{aligned}
\end{equation}
for all $\bm{\delta} \in  \{\bm{\delta}^T\mathbf{1}=0\}$. Note that \eqref{eq6} does not imply the optimal condition in Lemma \ref{lem2} as $\nabla G(\mathbf{w^*})$, which is a constant vector, does not equal to $\nabla F(\hat{\mathbf{w}}(\mathbf{S}_n)^T\bm{\mu}_n(\mathbf{S}_n),
\hat{\mathbf{w}}(\mathbf{S}_n)^T\mathbf{V}_n(\mathbf{S}_n) $ $\hat{\mathbf{w}}(\mathbf{S}_n))$, which is a function vector of $\mathbf{S}_n$, in \eqref{eq6} in general. On the other hand, if $\nabla F$ (and hence $\nabla G$) is a constant vector, then \eqref{eq6} and Lemma \ref{lem2} implies that plug-in solution $\hat{\mathbf{w}}$ is also a solution to \eqref{eq1}. Let $U = E(\mathbf{w}^T\bm{\mu}_n)$ and $V=E(\mathbf{w}^T\mathbf{V}_n\mathbf{w})$. If we consider the problem formulation in \citet{lai2011mean}, we have \begin{equation}\label{eq7}
F(U,V) = U-\lambda(V-U^2) \qquad \nabla F(U,V) = (1+2\lambda U, -\lambda)^T.
\end{equation} 
If the variance of $\hat{\mathbf{w}}(\mathbf{S}_n)^T\bm{\mu}_n(\mathbf{S}_n)$ is small so that $\hat{\mathbf{w}}(\mathbf{S}_n)^T\bm{\mu}_n(\mathbf{S}_n)\simeq E(\hat{\mathbf{w}}^T\bm{\mu}_n)$ with high probability, then  $\nabla F(\hat{\mathbf{w}}(\mathbf{S}_n)^T\bm{\mu}_n(\mathbf{S}_n),
\hat{\mathbf{w}}(\mathbf{S}_n)^T\mathbf{V}_n(\mathbf{S}_n)\hat{\mathbf{w}}(\mathbf{S}_n))$ is close to $\nabla G(\hat{\mathbf{w}})$ in most cases by \eqref{eq7} and hence $\hat{\mathbf{w}}$ is a good approximate solution to \eqref{eq1} by Lemma \ref{lem2}. It supports the use of plug-in approach for the objective funcitons of the form \eqref{eq7} for some cases. Other objective functions that people would consider are 
\begin{align}
\{F,\nabla F\}&=\left\lbrace \frac{U-r_0}{\sqrt{V-U^2}},\quad (V-U^2)^{-3/2}(V-r_0U,\frac{r_0-U}{2})  \right\rbrace, \label{eq8}\\
\{F,\nabla F\}&=\left\lbrace  U -\lambda\sqrt{V-U^2},\quad (1+\frac{\lambda U}{\sqrt{V-U^2}},-\frac{\lambda}{2\sqrt{V-U^2}})  \right\rbrace. \label{eq9} 
\end{align}
Here the objective function in \eqref{eq8} is for Sharpe ratio with $r_0$ constant risk-free interest rate, and the one in \eqref{eq9} is for $100(1-\alpha)\%$ VaR under normal distributed returns assumption when $\lambda = z_{1-\alpha}$, and for $100(1-\alpha)\%$ ES when $\lambda=\phi(z_{1-\alpha})/\alpha$. 

\section{Functional optimization approach}\label{sec3}
As we know that the plug-in solution $\hat{\mathbf{w}}$ is not the solution of \eqref{eq1} in general, we want to find another function vector $\mathbf{w}$ such that $G(\mathbf{w})>G(\hat{\mathbf{w}})$. Since the solution of \eqref{eq1} is a constant vector if $\mathbf{r}_t$ are independent, we assume $\mathbf{r}_t$ depends on its past values. As choosing an appropriate time-series model for $\mathbf{r}_t$ is not the focus of this paper, we simply assume an AR(1) model for each stock's return,\begin{equation}\label{eq10}
r_{i,t}=\alpha_i+\beta_i r_{i,t-1}+\varepsilon_{i,t},\quad i=1,\dots,p,
\end{equation}
where $\alpha_i$ and $\beta_i$ are model coefficients and $\varepsilon_{i,t}$ is an error term with mean 0 independent of $r_{i,t-1}$, in this paper. Note that this assumed model can be replaced by any other models for the theorems and algorithms in the paper. For theoretical analysis, we assume $\bm{\mu}_n(\mathbf{S}_n)=E(\mathbf{r}_{n+1}\mid\mathbf{S}_n)=(\alpha_i+\beta_i r_{i,n})_{i=1,\dots,p}$ and $\mathbf{V}_n(\mathbf{S}_n)=E(\mathbf{r}_{n+1}\mathbf{r}_{n+1}^T\mid\mathbf{S}_n)=Var(\bm{\varepsilon}_{n+1}\mid\mathbf{S}_n)+\bm{\mu}_n(\mathbf{S}_n)\bm{\mu}_n(\mathbf{S}_n)^T$ are known functions of $\mathbf{S}_n$. In practice, we use the least squares estimators $\hat{\alpha}_i(\mathbf{S}_n)$ and $\hat{\beta}_i(\mathbf{S}_n)$ given $\mathbf{S}_n$ to replace $\alpha_i$ and $\beta_i$, and use $\hat{\bm{\mu}}(\mathbf{S}_n)=(\hat{\alpha}_i(\mathbf{S}_n)+\hat{\beta}_i(\mathbf{S}_n)r_{i,n})_{i=1,\dots,p}$ and $\hat{\mathbf{V}}(\mathbf{S}_n)=\frac{1}{n}\sum_{t=1}^{n}\hat{\bm{\varepsilon}}_t(\mathbf{S}_n)\hat{\bm{\varepsilon}}_t(\mathbf{S}_n)^T+\hat{\bm{\mu}}(\mathbf{S}_n)\hat{\bm{\mu}}(\mathbf{S}_n)^T$, where $\hat{\bm{\varepsilon}}_t(\mathbf{S}_n)=(r_{i,t}-\hat{\alpha}_i(\mathbf{S}_n)-\hat{\beta}_i(\mathbf{S}_n)r_{i,t-1})_{i=1,\dots,p}$, to replace $\bm{\mu}_n(\mathbf{S}_n)$ and $\mathbf{V}_n(\mathbf{S}_n)$.

\subsection{One-step algorithm}\label{onestep}
By equation \eqref{eq4}, $G(\hat{\mathbf{w}}+t\bm{\delta}) = G(\hat{\mathbf{w}})+ t \nabla G(\hat{\mathbf{w}})^T$ $ (E(\bm{\delta}^T\bm{\mu}_n), 2E(\hat{\mathbf{w}}^T\mathbf{V}_n\bm{\delta}))+O(t^2)$, which implies that if we choose $\bm{\delta}(\mathbf{S}_n)=(\bm{\mu}_n(\mathbf{S}_n),2\mathbf{V}_n(\mathbf{S}_n)\hat{\mathbf{w}}(\mathbf{S}_n))$, and $\nabla G(\hat{\mathbf{w}})$ and $\bm{\delta}(\mathbf{S}_n)$ does not equal to zero almost for sure, then the first order term of $\bm{\delta}$ in \eqref{eq4} becomes $E(||\bm{\delta}||^2)>0$ and hence $G(\hat{\mathbf{w}}+t\bm{\delta})>G(\hat{\mathbf{w}})$ for positive small $t$. Let $D_1(\mathbf{w})(\mathbf{S}_n)=(\bm{\mu}_n(\mathbf{S}_n), 2\mathbf{V}_n(\mathbf{S}_n)\mathbf{w}(\mathbf{S}_n))$ $\nabla G(\hat{\mathbf{w}})$. In order to satisfy the constraint $\bm{\delta}^T\mathbf{1}=0$ so that $\hat{\mathbf{w}}+t\bm{\delta} \in \Omega$, we consider $\bm{\delta}=\mathbf{P}D_1(\hat{\mathbf{w}})$, where $\mathbf{P}=(\mathbf{I}-\mathbf{1}\mathbf{1}^T/p)$ is the projection matrix projecting $D_1(\hat{\mathbf{w}})(\mathbf{S}_n)$ on to the space that is orthogonal to $\mathbf{1}$ for all $\mathbf{S}_n$. Note that $\mathbf{P}^T=\mathbf{P}$ and $\mathbf{P}^2=\mathbf{P}$, hence the first order term in \eqref{eq4} with this $\bm{\delta}$ satisfying $\bm{\delta}^T\mathbf{1}=0$ still equals to $E(||\bm{\delta}||^2)\geq0$. Note that $E(||\bm{\delta}||^2)=0$ if and only if $\mathbf{P}(\bm{\mu}_n(\mathbf{S}_n), 2\mathbf{V}_n(\mathbf{S}_n)\hat{\mathbf{w}}(\mathbf{S}_n))\nabla G(\hat{\mathbf{w}})=0$ for all $\mathbf{S}_n$ almost for sure, which implies that for all $\bm{\delta}_0$ satisfying $\bm{\delta}_0^T\mathbf{1}=0$, $\mathbf{P}^T\bm{\delta}_0=\bm{\delta}_0$ and $E[\nabla G(\hat{\mathbf{w}})^T(\bm{\mu}_n, 2\mathbf{V}_n\hat{\mathbf{w}})^T\bm{\delta}_0]=0$ and hence $\hat{\mathbf{w}}$ satisfies the optimal condition in Lemma \ref{lem2}. Therefore, before the optimal condition is satisfied, we have $E(||\bm{\delta}||^2)>0$ and $G(\hat{\mathbf{w}}+t\bm{\delta})>G(\hat{\mathbf{w}})$ for positive $t$ small enough. The computation of such $t$ and $\bm{\delta}$ function is presented in Algorithm \ref{onestepalgo}, which involves the computation of $E(\hat{\mathbf{w}}^T\bm{\mu}_n)$ and $E(\hat{\mathbf{w}}^T\mathbf{V}_n\hat{\mathbf{w}})$ for evaluating $G(\hat{\mathbf{w}})$ (and also $G(\hat{\mathbf{w}}+t\bm{\delta}))$. Since the expectations are very difficult, if not impossible, to compute explicitly in general, we use bootstrap method to generate $\mathbf{S}_n^{(1)},\dots,\mathbf{S}_n^{(b)}$ based on $\mathbf{S}_n$ for some fixed integer $B$, and then estimate the expected values by \begin{equation}\label{eq11}
\begin{aligned}
E(\hat{\mathbf{w}}^T\bm{\mu}_n) &\simeq \frac{1}{B} \sum_{b=1}^{B} \hat{\mathbf{w}}(\mathbf{S}_n^{(b)})^T\bm{\mu}_n(\mathbf{S}_n^{(b)}),\\ E(\hat{\mathbf{w}}^T\mathbf{V}_n\hat{\mathbf{w}}) &\simeq \frac{1}{B} \sum_{b=1}^{B}\hat{\mathbf{w}}(\mathbf{S}_n^{(b)})^T\mathbf{V}_n(\mathbf{S}_n^{(b)})\hat{\mathbf{w}}(\mathbf{S}_n^{(b)})
\end{aligned}
\end{equation}
If $\mathbf{r}_t$ are independent, then $\mathbf{S}_n^{(b)}$ can be generated by resampling $\{\mathbf{r}_1,\dots,\mathbf{r}_n\}$ in $\mathbf{S}_n$. However, as we assume $\mathbf{r}_t$ are dependent, we consider other resampling methods to handle the dependence. The details are given in Section \ref{Sn generation}.

\begin{breakablealgorithm}\label{onestepalgo}
	\caption{One-step algorithm}
	
	\begin{algorithmic}
		
		\Require
		$\mathbf{S}_n=\{\mathbf{r}_1, \dots,\mathbf{r}_n\}$, functions $\bm{\mu}_n$, $\mathbf{V}_n$ and $\hat{\mathbf{w}}$
		
		\State \textbf{Step 1} Generate $\mathbf{S}_n^{(1)},\dots,\mathbf{S}_n^{(B)}$ based on $\mathbf{S}_n$
		
		\State \textbf{Step 2} 
		\begin{enumerate}
			\item[] Compute $\bm{\mu}_n(\mathbf{S}_n^{(b)}), \mathbf{V}_n(\mathbf{S}_n^{(b)})$ and $\hat{\mathbf{w}}(\mathbf{S}_n^{(b)})$ for $b=1,\dots,B$.
			\item[] Compute $\hat{U} = \frac{1}{B}\sum_{b=1}^{B}\hat{\mathbf{w}}(\mathbf{S}_n^{(b)})^T\bm{\mu}_n(\mathbf{S}_n^{(b)})$, $\hat{V}=\frac{1}{B}\sum_{b=1}^{B}\hat{\mathbf{w}}(\mathbf{S}_n^{(b)})^T \mathbf{V}_n(\mathbf{S}_n^{(b)})$ $\hat{\mathbf{w}}(\mathbf{S}_n^{(b)})$.
		\end{enumerate}

		\State \textbf{Step 3} Compute  $\bm{\delta}_1(\mathbf{S}_n)=(\bm{\mu}_n(\mathbf{S}_n),\,2\mathbf{V}_n(\mathbf{S}_n)\hat{\mathbf{w}}(\mathbf{S}_n))$ $\nabla G(\hat{\mathbf{w}})$, where $\nabla G(\hat{\mathbf{w}})= \nabla F(\hat{U},\hat{V})$.
		
		\State \textbf{Step 4}\begin{enumerate}
			\item[] Compute $\bm{\delta}(\mathbf{S}_n^{(b)}) =\mathbf{P}\bm{\delta}_1(\mathbf{S}_n^{(b)})=(\mathbf{I}-\mathbf{1}\mathbf{1}^T/p)$ $\bm{\delta}_1(\mathbf{S}_n^{(b)})$ for $b=1,\dots,B$
			\item[] Compute $\hat{U}_\delta= \frac{1}{B}\sum_{b=1}^{B}\bm{\delta}(\mathbf{S}_n^{(b)})^T\bm{\mu}_n(\mathbf{S}_n^{(b)})$, $\hat{V}_\delta=\frac{1}{B}\sum_{b=1}^{B}2\bm{\delta}(\mathbf{S}_n^{(b)})^T \mathbf{V}_n(\mathbf{S}_n^{(b)})$ $\hat{\mathbf{w}}(\mathbf{S}_n^{(b)})$,
			$\hat{V}_{\delta \delta}=\frac{1}{B}\sum_{b=1}^{B}\bm{\delta}(\mathbf{S}_n^{(b)})^T\mathbf{V}_n(\mathbf{S}_n^{(b)})\bm{\delta}(\mathbf{S}_n^{(b)})$.
		\end{enumerate}
		
		\State \textbf{Step 5} Choose $\hat{t}$ such that $F(\hat{U}+\hat{t}\hat{U}_\delta,\hat{V}+\hat{t} \hat{V}_\delta+\hat{t}^2\hat{V}_{\delta \delta})>F(\hat{U},\hat{V})$
		
		\Ensure
		$\hat{t}$ and $\bm{\delta}$
	\end{algorithmic}
\end{breakablealgorithm}

\subsection{Multi-step algorithm}
Let $\mathbf{w}_0$ be an initial estimate of the solution $\mathbf{w^*}$ of \eqref{eq1}. It can be the plug-in solution $\hat{\mathbf{w}}$ or other trading strategies like equal weighting, i.e. $w_i=1/p$ for all $i$. Let $t_0$ and $\bm{\delta}_0=\mathbf{P}(\bm{\mu}_n,2\mathbf{V}_n\mathbf{w}_0) \nabla G(\mathbf{w}_0)$ be the step-size and direction computed by Algorithm \ref{onestepalgo} with $\hat{\mathbf{w}}=\mathbf{w}_0$. Then we can get an updated function\begin{equation}\label{eq13}
\begin{aligned}
\mathbf{w}_1(\mathbf{S}_n)=t_0a_0\mathbf{P}\bm{\mu}_n(\mathbf{S}_n)+(\mathbf{I}+2t_0b_0\mathbf{P}\mathbf{V}_n(\mathbf{S}_n))\mathbf{w}_0(\mathbf{S}_n),
\end{aligned}
\end{equation}
where $(a_0, b_0)^T=\nabla G(\mathbf{w}_0)$ do not depend on input $\mathbf{S}_n$. Setting $\hat{\mathbf{w}}=\mathbf{w}_1$ and applying Algorithm \ref{onestepalgo} again results in another estimate function $\mathbf{w}_2(\mathbf{S}_n)=\mathbf{w}_1(\mathbf{S}_n)+t_1\bm{\delta}_1(\mathbf{S}_n)$. Ignoring the bias from the expected values estimated by boostrap method, the argument in Section \ref{onestep} implies that $G(\mathbf{w}_2)>G(\mathbf{w}_1)>G(\mathbf{w}_0)$ and we can repeat Algorithm \ref{onestepalgo} multiple times to get a better estimate of $\mathbf{w^*}$.

Let $\mathbf{w}_k$ be the estimated function of $\mathbf{w^*}$ after applying Algorithm \ref{onestepalgo} $k$ times, then it is a function of $\mathbf{S}_n$ consisting of a sequence of functions, namely $\mathbf{w}_0,\mathbf{w}_1,\dots,\mathbf{w}_{k-1}$. Evaluating $\mathbf{w}_k$ at a particular $\mathbf{S}_n$ requires the evaluations of $\mathbf{w}_0,\dots,\mathbf{w}_{k-1}$ at $\mathbf{S}_n$. It means that the sequence of functions needs to be memorized, which is troublesome when $k$ is large. Fortunately, equation \eqref{eq13} gives us a solution to this issue. Let $B_k(\mathbf{S}_n)=\mathbf{I}+2t_kb_k\mathbf{P}\mathbf{V}_n(\mathbf{S}_n)$ be a function mapping $\mathbf{S}_n$ to $\mathbb{R}^{p\times p}$, where $t_k$ is the step-size chosen in the $k^{th}$ iteration of Algorithm \ref{onestepalgo} and $(a_k, b_k)^T=\nabla G(\mathbf{w}_k)$. Then similar to \eqref{eq13}, we have $\mathbf{w}_{k+1}(\mathbf{S}_n)=t_ka_k\mathbf{P}\bm{\mu}_n(\mathbf{S}_n)+B_k(\mathbf{S}_n)\mathbf{w}_k(\mathbf{S}_n)=[t_ka_k\mathbf{I}+\sum_{i=0}^{k-1}(\prod_{j=i+1}^{k}B_j(\mathbf{S}_n))t_i a_i]\mathbf{P}\bm{\mu}_n(\mathbf{S}_n)+(\prod_{j=0}^{k}B_j(\mathbf{S}_n)) \mathbf{w}_0(\mathbf{S}_n)$. It implies that the function $\mathbf{w}_k$ only involves $3$ functions ($\bm{\mu}_n$, $\mathbf{V}_n$ and $\mathbf{w}_0$), the sequence of chosen step-sizes $\{t_0,\dots,$ $t_{k-1}\}$ and the sequence of gradient vectors $\{\nabla G(\mathbf{w}_0),\nabla G(\mathbf{w}_1),\dots,$ $\nabla G(\mathbf{w}_{k-1})\}$, which are generated during the process of applying Algorithm \ref{onestepalgo} for getting $\{\mathbf{w}_0,\dots,$ $\mathbf{w}_k\}$. Note that we don't need to memorize the sequence of functions $\{\mathbf{w}_0,\dots,$ $\mathbf{w}_k\}$ in those $k$ times Algorithm \ref{onestepalgo} iterations. The details of computing $\{t_0,\dots,t_{k-1}\}$ and $\{\nabla G(\mathbf{w}_0),\dots, \nabla G(\mathbf{w}_{k-1})\}$ are presented in Algorithm \ref{multi-step}.

\begin{breakablealgorithm}\label{multi-step}
	\caption{Multi-step algorithm}
	
	\begin{algorithmic}
		
		\Require
		Observation $\mathbf{S}_n=\{\mathbf{r}_1, \dots,\mathbf{r}_n\}$, functions $\bm{\mu}_n$, $\mathbf{V}_n$ and $\mathbf{w}_0$
		
		\State \textbf{Step 1} Generate $\mathbf{S}_n^{(1)}, \dots,\mathbf{S}_n^{(B)}$ based on $\mathbf{S}_n$

		\State \textbf{Step 2} 
		\begin{enumerate}
			\item[] Compute $\bm{\mu}_n(\mathbf{S}_n^{(b)})$,$\mathbf{V}_n(\mathbf{S}_n^{(b)})$ and $\mathbf{w}_0(\mathbf{S}_n^{(b)})$ for $b=1,\dots,B$.
			\item[] 
			\begin{enumerate}
				\item[]Compute $U_0=\frac{1}{B}\sum_{b=1}^{B}\mathbf{w}_0(\mathbf{S}_n^{(b)})^T\bm{\mu}_n(\mathbf{S}_n^{(b)})$ $\approx E(\mathbf{w}_0^T\bm{\mu}_n)$,
				\item[]$\quad V_0=\frac{1}{B} \sum_{b=1}^{B}$ $\mathbf{w}_0(\mathbf{S}_n^{(b)})^T\mathbf{V}_n(\mathbf{S}_n^{(b)})\mathbf{w}_0(\mathbf{S}_n^{(b)})$ $\approx E(\mathbf{w}_0^T\mathbf{V}_n\mathbf{w}_0)$
			\end{enumerate}
		\end{enumerate}

		\State \textbf{Step 3} For $k = 0,1,2,\dots, K$
		\begin{enumerate}
			\item[3.1]  Compute $(a_k,b_k)^T = \nabla F(U_k,V_k)$
			\item[3.2] \begin{enumerate}
				\item[] Compute $\bm{\delta}_{Pk}(\mathbf{S}_n^{(b)})=\mathbf{P}(a_k\bm{\mu}_n(\mathbf{S}_n^{(b)})+2b_k$ $\mathbf{V}_n(\mathbf{S}_n^{(b)})\mathbf{w}_k(\mathbf{S}_n^{(b)}))$ for $b=1, \dots,B$,
				\item[] Compute $U_{\delta} = \frac{1}{B}\sum_{b=1}^{B}\bm{\delta}_{Pk}(\mathbf{S}_n^{(b)})^T\bm{\mu}_n(\mathbf{S}_n^{(b)})$, $V_{\delta} = \frac{1}{B}\sum_{b=1}^{B}(2\bm{\delta}_{Pk} (\mathbf{S}_n^{(b)})^T$ $\mathbf{V}_n(\mathbf{S}_n^{(b)})\mathbf{w}_k(\mathbf{S}_n^{(b)}))$, $V_{\delta\delta} = \frac{1}{B}\sum_{b=1}^{B}\bm{\delta}_{Pk}(\mathbf{S}_n^{(b)})^T\mathbf{V}_n(\mathbf{S}_n^{(b)}) \bm{\delta}_{Pk}(\mathbf{S}_n^{(b)})$
			\end{enumerate} 
			\item[3.3]
			\begin{enumerate}
				\item[] Choose $t_k$ such that $F(U_k+t_k U_{\delta},V_k+t_k V_{\delta}+t_k^2 V_{\delta\delta})>F(U_k,V_k)$
			\end{enumerate}
			
			\item[3.4] Update $U_{k+1}=U_k+t_k U_{\delta}$, $V_{k+1}= V_k+t_k V_{\delta}+t_k^2V_{\delta\delta}$, $\mathbf{w}_{k+1}(\mathbf{S}_n^{(b)})=\mathbf{w}_k(\mathbf{S}_n^{(b)})+t_k\bm{\delta}_k(\mathbf{S}_n^{(b)})$
		\end{enumerate}
		
		\Ensure $\mathbf{a}=(a_0,a_1, \dots,a_{k-1})$, $\mathbf{b}=(b_0,b_1, \dots,b_{k-1})$, and $\mathbf{t}=(t_0,t_1, \dots,t_{k-1})$
		
	\end{algorithmic}
\end{breakablealgorithm}

\subsection{Convergence rate}
If the solution $\mathbf{w^*}$ of \eqref{eq1} exists, we have seen that the estimated functions $\mathbf{w}_k$ having the property that the sequence $\{G(\mathbf{w}_k), k=0,1,2,\dots\}$ is increasing and hence converges to some point as the increasing sequence $\{G(\mathbf{w}_k)\}$ is bounded above by $G(\mathbf{w^*})$. In this section, we give the conditions for $G(\mathbf{w}_k)$ converging to $G(\mathbf{w^*})$ and the convergence rate when $\{\mathbf{w}_k\}$ converges. Given $\mathbf{w}$, let $\mathbf{z_{\mathbf{w}}}(\bm{\delta})=(E(\bm{\delta}^T\bm{\mu}_n),$ $ 2E(\mathbf{w}^T\mathbf{V}_n\bm{\delta}))$ and consider the Taylor expansion of $F$ up to second order derivative to get $G(\mathbf{w}+\bm{\delta})-G(\mathbf{w})$ equal to
\begin{equation}\label{eq15}
\begin{aligned}
E(D_1(\mathbf{w})^T\bm{\delta})+b(\mathbf{w})E(\bm{\delta}^T\mathbf{V}_n\bm{\delta})+\frac{1}{2}\mathbf{z_{\mathbf{w}}}^T\nabla^2G(\mathbf{w})\mathbf{z_{\mathbf{w}}}+\mathit{O}(E||\bm{\delta}||^3),
\end{aligned}
\end{equation}
where $b(\mathbf{w})$ is the second argument of $\nabla G(\mathbf{w})$ and $\nabla^2 G(\mathbf{w})$ is the Hessian matrix of $F$ evaluated at $(E(\mathbf{w}^T\bm{\mu}_n)$, $E(\mathbf{w}^T\mathbf{V}_n\mathbf{w}))$. Set $\mathbf{w}=\mathbf{w^*}$ in \eqref{eq15}, if we have $b(\mathbf{w^*})<0$ and $\nabla^2 G(\mathbf{w^*})$ is semi-negative definite ( or $b(\mathbf{w^*})\leq0$ and $\nabla^2 G(\mathbf{w^*})$  is negative definite), then the second order term of $\bm{\delta}$ in \eqref{eq15} is strictly negative. Suppose we further have the following strong concavity assumption around $\mathbf{w^*}$:
\begin{assumption}[Local strong concavity assumption]
	There exist $\varepsilon >0$ and $M>m>0$ such that for all $\bm{\delta}$ satisfying $\mathbf{w^*}+\bm{\delta}\in{\Omega}\cap B(\varepsilon)$, where $B(\varepsilon)=\{\mathbf{w}:E(||\mathbf{w}(\mathbf{S}_n)-\mathbf{w^*}(\mathbf{S}_n)||^2)\leq\varepsilon\}$, we have
	\begin{equation}\label{eq16}
	\begin{aligned}
	-\frac{m}{2} > \frac{b(\mathbf{w^*})E(\bm{\delta}^T\mathbf{V}_n\bm{\delta})+\mathbf{z_{\mathbf{w^*}}}^T\nabla^2 G(\mathbf{w^*}) \mathbf{z}_{\mathbf{w^*}}/2}{E||\bm{\delta}||^2}
	> -\frac{M}{2}.
	\end{aligned}
	\end{equation}
\end{assumption}
Therefore, for small $E||\bm{\delta}||$, we have $-\frac{m}{2}E||\bm{\delta}||^2\geq G(\mathbf{w^*}+\bm{\delta})-G(\mathbf{w^*}) -E(D_1(\mathbf{w^*})^T\bm{\delta})\geq -\frac{M}{2}E||\bm{\delta}||^2$ for $\bm{\delta}$ in the assumption. Based on the assumption \eqref{eq16}, we have the following theorem for convergence.

\begin{theorem}\label{convergence}
	Assume there exist $\varepsilon >0 $, $M>m>0$ such that 
	\begin{enumerate}
		\item[A1.] for all $\mathbf{w} \in B(\varepsilon)$ and $\mathbf{w}+\bm{\delta} \in B(\varepsilon)$, \begin{align*}
		\begin{aligned}
		-\frac{m}{2} \geq \frac{G(\mathbf{w}+\bm{\delta})-G(\mathbf{w}) -E(D_1(\mathbf{w})^T\bm{\delta})}{E||\bm{\delta}||^2} \geq -\frac{M}{2};
		\end{aligned}
		\end{align*}
		\item[A2.] the estimated functions $\mathbf{w}_k$, $k=0,1,2,\dots$, generated by Algorithm \ref{multi-step} with $t_k=\frac{2}{M+m}$ are all in $B(\varepsilon)$;
		\item[A3.] for $\mathbf{w} \in B(E||\mathbf{w}_0-\mathbf{w^*}||)$, $\mathbf{w}+\bm{\delta}(\mathbf{w})\in B(\varepsilon)$ and       $\mathbf{w}+\bm{\delta}(\mathbf{w})+\frac{1}{M-m}(D_1(\mathbf{w}+\bm{\delta}(\mathbf{w}))-D_1(\mathbf{w})+m\bm{\delta}(\mathbf{w}))\in B(\varepsilon)$, where 
		$\bm{\delta}(\mathbf{w})=\frac{1}{M+m}\mathbf{P}D_1(\mathbf{w})$.
	\end{enumerate}
	Then, we have
	\begin{equation}\label{eq17}
	E||\mathbf{w}_k-\mathbf{w^*}||\leq(\frac{M-m}{M+m})^kE||\mathbf{w}_0-\mathbf{w^*}||.
	\end{equation}
\end{theorem}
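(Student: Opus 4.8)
The plan is to read this as the $L^2$ (Hilbert-space) version of the classical linear-convergence bound for gradient ascent on a strongly concave, smooth objective with the optimal step $\frac{2}{M+m}$. Writing $\langle \mathbf{u},\mathbf{v}\rangle=E(\mathbf{u}(\mathbf{S}_n)^T\mathbf{v}(\mathbf{S}_n))$ for the $L^2$ inner product and $\|\mathbf{u}\|^2=E(\|\mathbf{u}(\mathbf{S}_n)\|^2)$ for the induced norm (the norm in which $B(\varepsilon)$ and \eqref{eq17} are stated), the first observation, from the first-order term of \eqref{eq4}, is that the directional derivative of $G$ at $\mathbf{w}$ in a feasible direction $\bm{\delta}$ equals $\langle D_1(\mathbf{w}),\bm{\delta}\rangle$, so $D_1(\mathbf{w})$ is the functional gradient of $G$. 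Hence $f:=-G$ has gradient $-D_1$, and A1 says exactly that $f$ is $m$-strongly convex and $M$-smooth on $B(\varepsilon)$ in this metric. Because $\Omega$ is an affine translate of $T=\{\bm{\delta}:\bm{\delta}^T\mathbf{1}=0\}$ and $\mathbf{P}=\mathbf{I}-\mathbf{1}\mathbf{1}^T/p$ is the pointwise orthogonal projection onto $T$, the recursion $\mathbf{w}_{k+1}=\mathbf{w}_k+t_k\mathbf{P}D_1(\mathbf{w}_k)$ is projected gradient descent on $f$ inside $\Omega$, and Lemma \ref{lem2} gives $\mathbf{P}D_1(\mathbf{w^*})=0$, so $\mathbf{w^*}$ is the fixed point.

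Next I would prove the one-step contraction. With $g_k=\mathbf{P}D_1(\mathbf{w}_k)$ and $\mathbf{P}D_1(\mathbf{w^*})=0$,
\[
\|\mathbf{w}_{k+1}-\mathbf{w^*}\|^2=\|\mathbf{w}_k-\mathbf{w^*}\|^2+2t_k\langle g_k,\mathbf{w}_k-\mathbf{w^*}\rangle+t_k^2\|g_k\|^2 .
\]
The sharp rate comes from the co-coercivity inequality
\[
\langle g_k,\mathbf{w}_k-\mathbf{w^*}\rangle\le-\tfrac{mM}{m+M}\|\mathbf{w}_k-\mathbf{w^*}\|^2-\tfrac{1}{m+M}\|g_k\|^2 ,
\]
valid for $f\in\mathcal{S}_{m,M}$. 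Substituting it together with $t_k=\frac{2}{M+m}$ makes the coefficient of $\|g_k\|^2$ vanish and collapses the coefficient of $\|\mathbf{w}_k-\mathbf{w^*}\|^2$ to $1-\frac{4mM}{(M+m)^2}=\big(\frac{M-m}{M+m}\big)^2$, giving $\|\mathbf{w}_{k+1}-\mathbf{w^*}\|\le\frac{M-m}{M+m}\|\mathbf{w}_k-\mathbf{w^*}\|$; iterating yields \eqref{eq17}. The co-coercivity itself I would get in the textbook way, by applying the elementary co-coercivity of the convex, $(M-m)$-smooth surrogate $\phi:=-G-\frac{m}{2}\|\cdot\|^2$ (so that $\nabla\phi=-D_1-m\,\mathrm{id}$) and rearranging; that algebra is routine.

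The one genuine difficulty, and the reason A2 and A3 are imposed, is that A1 holds only on $B(\varepsilon)$, so every inequality above may be invoked only at function-valued points lying in that ball. Assumption A2 keeps the iterates $\mathbf{w}_k$ in $B(\varepsilon)$, so the expansion and the strong-concavity side are legitimate along the trajectory. The subtle ingredient is the co-coercivity step, whose elementary proof evaluates $\phi$ at the gradient-mapping point $y-\frac{1}{M-m}\big(\nabla\phi(y)-\nabla\phi(\mathbf{w})\big)$ attached to a pair $(\mathbf{w},y)$. Taking $y=\mathbf{w}+\bm{\delta}(\mathbf{w})$ with $\bm{\delta}(\mathbf{w})=\frac{1}{M+m}\mathbf{P}D_1(\mathbf{w})$ and substituting $\nabla\phi=-D_1-m\,\mathrm{id}$, this point is exactly $\mathbf{w}+\bm{\delta}(\mathbf{w})+\frac{1}{M-m}\big(D_1(\mathbf{w}+\bm{\delta}(\mathbf{w}))-D_1(\mathbf{w})+m\bm{\delta}(\mathbf{w})\big)$, the expression displayed in A3. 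Thus A3 is precisely the requirement that this point (and $\mathbf{w}+\bm{\delta}(\mathbf{w})$) lie in $B(\varepsilon)$ for every $\mathbf{w}$ within $B(E\|\mathbf{w}_0-\mathbf{w^*}\|)$, so that the $M$-smoothness half of A1 is applicable there. The main obstacle is therefore not the contraction computation, which is immediate once co-coercivity holds, but the bookkeeping that certifies A1 at each auxiliary point produced by the classical argument, which is exactly what A2 and A3 provide.
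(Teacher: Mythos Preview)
Your proposal is correct and follows essentially the same route as the paper: the paper also works in the $L^2$ inner product $\langle\mathbf{x},\mathbf{y}\rangle=E(\mathbf{x}^T\mathbf{y})$, shifts to $\tilde G=G+\tfrac{m}{2}\|\cdot\|^2$ (your $-\phi$) to derive the co-coercivity inequality $\langle D_1(\mathbf{w}+\bm{\delta})-D_1(\mathbf{w}),\bm{\delta}\rangle\le-\tfrac{mM}{m+M}\|\bm{\delta}\|^2-\tfrac{1}{m+M}\|D_1(\mathbf{w}+\bm{\delta})-D_1(\mathbf{w})\|^2$, uses $\mathbf{P}(\mathbf{w}_k-\mathbf{w^*})=\mathbf{w}_k-\mathbf{w^*}$ and $\mathbf{P}D_1(\mathbf{w^*})=0$ from Lemma~\ref{lem2}, and then collapses the one-step expansion with $t_k=\tfrac{2}{M+m}$ to $\big(\tfrac{M-m}{M+m}\big)^2\|\mathbf{w}_k-\mathbf{w^*}\|^2$. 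Your reading of A3 as the condition that the auxiliary ``gradient-mapping'' point $\mathbf{w}+\bm{\delta}'$ used in the proof of co-coercivity stays in $B(\varepsilon)$ is exactly the role it plays in the paper.
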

Assumption A1 in Theorem \ref{convergence} implies that the local strong concavity assumption \eqref{eq16} holds for $\mathbf{w}$ around $\mathbf{w^*}$. We have seen that the sequence $\{G(\mathbf{w}_k)\}$ converges to some value. Assumption A2 further assumes that $\{E||\mathbf{w}_k-\mathbf{w^*}||\}$ does not diverge so that $\mathbf{w}_k\in B(\varepsilon)$ for all $k$ if $\mathbf{w}_0$ is close to $\mathbf{w^*}$. Note that if there exists $M$ satisfies \eqref{eq16}, then $M$ can be chosen to be a very large number so that $(M+m)^{-1}$ can be any small positive number. In this sense. if $D_1(\mathbf{w})$ is a smooth bounded function for $\mathbf{w} \in B(\varepsilon)$, then $\bm{\delta}(\mathbf{w})$ in Assumption A3 is small in the sense that $E||\bm{\delta}||$ and $E||D_1(\mathbf{w}+\bm{\delta}(\mathbf{w}))-D_1(\mathbf{w})||$ are small enough for Assumption A3 holds. The proof of Theorem \ref{convergence} is presented in Appendix. The first inequality of the local strong concavity assumption \eqref{eq16} holds if the conditional second moment function $\mathbf{V}_n$ is strictly positive definite and the second argument $b(\mathbf{w^*})$ of $\nabla G(\mathbf{w^*})$, which represents the change of $F$ with respect to the second moment of $(\mathbf{w^*})^T\mathbf{r}_{n+1}$, is strictly negative. Therefore the assumption $b(\mathbf{w^*}) <0$ essentially means the value of the objective function $F$ increases when the variance of the optimal portfolio decrease, which is what we want in the design of $F$. For example, $b(\mathbf{w^*})$ in the problem formulation \eqref{eq7} and \eqref{eq9} are $-\lambda$ and $-\lambda/(2\sqrt{Var((\mathbf{w^*})^T\mathbf{r}_{n+1})})$ which are both strictly negative. For the formulation \eqref{eq8} for Sharpe ratio, $b(\mathbf{w^*})=\frac{1}{2}Var((\mathbf{w^*})^T\mathbf{r}_{n+1})^{-\frac{3}{2}}(r_0-E((\mathbf{w^*})^T\mathbf{r}_{n+1}))$ is strictly negative if the expected return is greater than the risk-free return, which is reasonable to assume to be true as we would then buy risk-free assets otherwise. Therefore, the local storng concavity assumption is reasonable for general smooth $F$ in practice.

\section{Closed and convex constraint set $\Omega$} \label{Omega}
In Section \ref{sec3}, we consider $\Omega=\{\mathbf{w}:\mathbf{w}^T\mathbf{1}=1\}$, which means short-selling is allowed unlimitedly. In practice, short-selling is limited and thus the constraint $\mathbf{w}\geq\ell$ (i.e., $w_i\geq\ell$ for $i=1,\dots,p$) for some non-positive $\ell$ should be added to $\Omega$. If $\ell=0$, it is the case that short-selling is not allowed. Note that $\Omega=\{\mathbf{w}\mid\mathbf{w}^T\mathbf{1}=1,\mathbf{w}\geq \ell\}$ is a closed convex set. For a general function $\mathbf{w}$ mapping $\mathbf{S}_n$ to $\mathbf{R}^p$, we define $P_{\Omega}\mathbf{w}$ to be a function in $\Omega$ in the way that, given any $\mathbf{S}_n=\{\mathbf{r}_1,\dots,\mathbf{r}_n\}$,\begin{equation}\label{eq18}
P_{\Omega}\mathbf{w}(\mathbf{S}_n)=\arg \min_{\bm{\mu}\in \Omega(\mathbf{S}_n)}||\bm{\mu}-\mathbf{w}(\mathbf{S}_n)||^2,
\end{equation}
where $\Omega(\mathbf{S}_n)=\{\mathbf{w}(\mathbf{S}_n):\mathbf{w}\in\Omega\}\subset \mathbb{R}^p$. As $\Omega$ is closed and convex, $\Omega(\mathbf{S}_n)$ is also closed and convex and $P_\Omega\mathbf{w}(\mathbf{S}_n)$ is an operation projecting $\mathbf{w}(\mathbf{S}_n)$ onto $\Omega(\mathbf{S}_n)$. By projection theorems on closed and convex sets (see, for example, \citet{cheney1959proximity}), we have the following properties for $P_\Omega\mathbf{w}(\mathbf{S}_n)$ for all $\mathbf{S}_n$ of past returns. \begin{equation}\label{eq19}
||P_\Omega\mathbf{w}_1(\mathbf{S}_n)-P_\Omega\mathbf{w}_2(\mathbf{S}_n)||\leq ||\mathbf{w}_1(\mathbf{S}_n)-\mathbf{w}_2(\mathbf{S}_n)||,
\end{equation}
\begin{equation}\label{eq20}
(\mathbf{w}(\mathbf{S}_n)-P_\Omega\mathbf{w}(\mathbf{S}_n))^T(\mathbf{w}_0(\mathbf{S}_n)-P_\Omega\mathbf{w}(\mathbf{S}_n))\leq 0, \qquad \forall \mathbf{w}_0 \in \Omega
\end{equation}
where $\mathbf{w}$, $\mathbf{w}_1$ and $\mathbf{w}_2$ are general functions mapping $\mathbf{S}_n$ to $\mathbb{R}^p$. We start from the ascent direction $D_1(\mathbf{w}_k)$ for $\mathbf{w}_k\in\Omega$ that is defined in Section \ref{onestep}. Instead of considering $\bm{\delta}_k=\mathbf{P}D_1(\mathbf{w}_k)$ so that $\mathbf{w}_{k+1}=\mathbf{w}_k+t\bm{\delta}_k$ satisfying $\mathbf{w}^T\mathbf{1}=1$, we consider $\mathbf{w}_{k+1}=P_\Omega\mathbf{y}_k$, where $\mathbf{y}_k=\mathbf{w}_k+tD_1(\mathbf{w}_k)$ for some $t>0$, so that $\mathbf{w}_{k+1}\in \Omega$. Let $\bm{\delta}_k=\mathbf{w}_{k+1}-\mathbf{w}_k=P_\Omega\mathbf{y}-P_\Omega\mathbf{w}_k$, then from \eqref{eq19}, $||\bm{\delta}_k(\mathbf{S}_n)||\leq||\mathbf{y}_k(\mathbf{S}_n)-\mathbf{w}_k(\mathbf{S}_n)||=t||D_1(\mathbf{w}_k)(\mathbf{S}_n)||$, and thus by equation \eqref{eq4} we have \begin{equation}\label{eq21}
G(\mathbf{w}_{k+1})= G(\mathbf{w}_k)+E(D_1(\mathbf{w}_k)^T(\mathbf{w}_{k+1}-\mathbf{w}_k))+O(t^2).
\end{equation}
Note that $E(D_1(\mathbf{w}_k)^T(\mathbf{w}_{k+1}-\mathbf{w}_k))$ is of $O(t)$ and we want to show that $E(D_1(\mathbf{w}_k)^T(\mathbf{w}_{k+1}-\mathbf{w}_k))>0$ for $t>0$. From \eqref{eq20}, for all $\mathbf{S}_n$,
\begin{equation}\label{eq22}
\begin{aligned}
&(\mathbf{y}_k(\mathbf{S}_n)-\mathbf{w}_k(\mathbf{S}_n))^T(\mathbf{w}_{k+1}(\mathbf{S}_n)-\mathbf{w}_k(\mathbf{S}_n))\\
=&||\mathbf{w}_{k+1}(\mathbf{S}_n)-\mathbf{w}_k(\mathbf{S}_n)||^2-(\mathbf{y}_k(\mathbf{S}_n)-\mathbf{w}_{k+1}(\mathbf{S}_n))^T(\mathbf{w}_k(\mathbf{S}_n)-\mathbf{w}_{k+1}(\mathbf{S}_n))\geq 0,
\end{aligned}
\end{equation}
which implies that $E(D_1(\mathbf{w}_k)^T(\mathbf{w}_{k+1}-\mathbf{w}_k))=\frac{1}{t}E((\mathbf{y}_k-\mathbf{w}_k)^T(\mathbf{w}_{k+1}-\mathbf{w}_k))>0$ and hence we have $G(\mathbf{w}_{k+1})>G(\mathbf{w}_k)$ if $\mathbf{w}_{k+1}\neq\mathbf{w}_k$. The modified Algorithm \ref{multi-step} for general closed and convex $\Omega$ is presented in Algorithm \ref{Omegaalgo}. Let $\mathbf{w^*}$ be the solution and take $\mathbf{w}_k=\mathbf{w^*}$ in \eqref{eq22}. Since $\mathbf{w^*}$ is optimal, the equal sign must holds in \eqref{eq22} and thus $\mathbf{w}_{k+1}=P_\Omega\mathbf{y}_k=\mathbf{w}_k=\mathbf{w^*}$, where $\mathbf{y}_k=\mathbf{w^*}+tD_1(\mathbf{w^*})$. In the case that $\Omega=\{\mathbf{w}:\mathbf{w}^T\mathbf{1}=1\}$, we can check that $P_\Omega\mathbf{y}_k=\mathbf{w^*}+t\mathbf{P}D_1(\mathbf{w^*})$, where $\mathbf{P}$ is the projection matrix in Algorithm \ref{onestepalgo}, and hence $\mathbf{P}D_1(\mathbf{w}_k)=0$, which leads to the optimal condition in Lemma \ref{lem2}. With this observation we can show that Theorem \ref{convergence} also holds for closed and convex $\Omega$. The proof is provided in the Appendix.
\begin{theorem} \label{thm2}
	With the same assumptions in Theorem 1, the inequality \eqref{eq17} also holds with closed and convex $\Omega$ and $\mathbf{w}_k$ generated by Algorithm \ref{Omegaalgo}.
\end{theorem}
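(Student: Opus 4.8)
The plan is to show that the proof of Theorem \ref{convergence} uses the geometry of the feasible set only through two properties of the projection, both of which survive the passage from the affine constraint $\{\mathbf{w}:\mathbf{w}^T\mathbf{1}=1\}$ to a general closed convex $\Omega$. The iterates of Algorithm \ref{Omegaalgo} are $\mathbf{w}_{k+1}=P_\Omega\mathbf{y}_k$ with $\mathbf{y}_k=\mathbf{w}_k+tD_1(\mathbf{w}_k)$, and I would run it with the same step-size $t=2/(M+m)$ prescribed in Assumption A2. The two properties I need are: (i) the non-expansiveness \eqref{eq19} of $P_\Omega$, and (ii) the fixed-point characterization of the optimum, $\mathbf{w^*}=P_\Omega\mathbf{y}^*$ with $\mathbf{y}^*=\mathbf{w^*}+tD_1(\mathbf{w^*})$, which is exactly what the paragraph preceding Theorem \ref{thm2} extracts from the variational inequality \eqref{eq20}/\eqref{eq22}. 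In the affine case these degenerate to $||\mathbf{P}\,\cdot\,||\le||\cdot||$ and $\mathbf{P}D_1(\mathbf{w^*})=0$; the point is that \eqref{eq19} and \eqref{eq22} are the correct convex-set replacements, so the contraction estimate of Theorem \ref{convergence} can be transcribed with $P_\Omega$ in place of the affine projection.

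First I would combine (ii) with (i), applied pointwise in $\mathbf{S}_n$, to write $||\mathbf{w}_{k+1}(\mathbf{S}_n)-\mathbf{w^*}(\mathbf{S}_n)||=||P_\Omega\mathbf{y}_k(\mathbf{S}_n)-P_\Omega\mathbf{y}^*(\mathbf{S}_n)||\le||\mathbf{y}_k(\mathbf{S}_n)-\mathbf{y}^*(\mathbf{S}_n)||$; squaring and taking expectations gives $E||\mathbf{w}_{k+1}-\mathbf{w^*}||^2\le E||\mathbf{y}_k-\mathbf{y}^*||^2$. This is the step where the linearity of the affine projection in the Theorem \ref{convergence} argument gets replaced by mere non-expansiveness, so I must confirm that no earlier step secretly used that $\mathbf{P}$ is linear or idempotent; the reduction to \eqref{eq19} shows it does not, since the bound $E||\mathbf{w}_{k+1}-\mathbf{w^*}||^2\le E||\mathbf{y}_k-\mathbf{y}^*||^2$ is all that is consumed downstream.

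Next I would expand $\mathbf{y}_k-\mathbf{y}^*=(\mathbf{w}_k-\mathbf{w^*})+t(D_1(\mathbf{w}_k)-D_1(\mathbf{w^*}))$ and control the cross term with the functional strong-monotonicity (co-coercivity) inequality that Assumption A1 supplies, namely $E[(D_1(\mathbf{w}_k)-D_1(\mathbf{w^*}))^T(\mathbf{w}_k-\mathbf{w^*})]\le-\tfrac{mM}{m+M}E||\mathbf{w}_k-\mathbf{w^*}||^2-\tfrac{1}{m+M}E||D_1(\mathbf{w}_k)-D_1(\mathbf{w^*})||^2$, precisely as established in the proof of Theorem \ref{convergence}; Assumptions A2 and A3 keep $\mathbf{w}_k$ and the auxiliary points needed for this inequality inside $B(\varepsilon)$ so that A1 is in force at every iteration. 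With $t=2/(M+m)$ the $E||D_1(\mathbf{w}_k)-D_1(\mathbf{w^*})||^2$ contributions cancel and the coefficient of $E||\mathbf{w}_k-\mathbf{w^*}||^2$ collapses to $((M-m)/(M+m))^2$, giving $E||\mathbf{y}_k-\mathbf{y}^*||^2\le(\tfrac{M-m}{M+m})^2E||\mathbf{w}_k-\mathbf{w^*}||^2$. Chaining this with the previous paragraph yields $E||\mathbf{w}_{k+1}-\mathbf{w^*}||^2\le(\tfrac{M-m}{M+m})^2E||\mathbf{w}_k-\mathbf{w^*}||^2$, and iterating and taking square roots produces \eqref{eq17}.

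The hard part is not any isolated computation but the bookkeeping of the first two paragraphs: verifying that the Theorem \ref{convergence} argument never exploits more of the constraint geometry than non-expansiveness \eqref{eq19} and the stationarity identity \eqref{eq22}, after which the convex-projection inequalities plug in verbatim. A secondary point to watch is that the monotonicity bound inherited from A1 is a genuinely $L^2$ (functional) statement, so the entire estimate must be run in terms of $E||\cdot||^2$ rather than pointwise, and the final \eqref{eq17} is then read as an inequality between the corresponding $L^2$ norms.
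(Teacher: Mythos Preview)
Your proposal is correct and follows essentially the same route as the paper's own proof: use the fixed-point identity $\mathbf{w^*}=P_\Omega(\mathbf{w^*}+tD_1(\mathbf{w^*}))$ together with the non-expansiveness \eqref{eq19} to bound $E\|\mathbf{w}_{k+1}-\mathbf{w^*}\|^2$ by $E\|\mathbf{y}_k-\mathbf{y}^*\|^2$, and then invoke the co-coercivity inequality already derived from Proposition~\ref{prop1} in the proof of Theorem~\ref{convergence} to obtain the contraction factor $((M-m)/(M+m))^2$. The paper's write-up is terser, but the logical skeleton is the same.
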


\begin{breakablealgorithm}\label{Omegaalgo}
	\caption{Multi-step algorithm for closed and convex $\Omega$}
	
	\begin{algorithmic}
		
		\Require
		Observation $\mathbf{S}_n=\{\mathbf{r}_1, \dots,\mathbf{r}_n\}$, functions $\bm{\mu}_n$, $\mathbf{V}_n$ and $\mathbf{w}_0$
		
		\State \textbf{Step 1} Generate $\mathbf{S}_n^{(1)}, \dots,\mathbf{S}_n^{(B)}$ based on $\mathbf{S}_n$

		\State \textbf{Step 2} 
		\begin{enumerate}
			\item[] Compute $\bm{\mu}_n(\mathbf{S}_n^{(b)})$,$\mathbf{V}_n(\mathbf{S}_n^{(b)})$ and $\mathbf{w}_0(\mathbf{S}_n^{(b)})$ for $b=1,\dots,B$.
			\item[]
			\begin{enumerate}
				\item[]Compute $U_0=\frac{1}{B}\sum_{b=1}^{B}\mathbf{w}_0(\mathbf{S}_n^{(b)})^T\bm{\mu}_n(\mathbf{S}_n^{(b)})\approx E(\mathbf{w}_0^T\bm{\mu}_n)$,
				\item[]$\quad V_0=\frac{1}{B} \sum_{b=1}^{B}$ $\mathbf{w}_0(\mathbf{S}_n^{(b)})^T\mathbf{V}_n(\mathbf{S}_n^{(b)})\mathbf{w}_0(\mathbf{S}_n^{(b)})\approx E(\mathbf{w}_0^T\mathbf{V}_n\mathbf{w}_0)$
			\end{enumerate}
		\end{enumerate}

		\State \textbf{Step 3} For $k = 0,1,2,\dots,K$
		\begin{enumerate}
			\item[3.1]  Compute $(a_k,b_k)^T = \nabla F(U_k,V_k)$
			\item[3.2] \begin{enumerate}
				\item[] Compute $\bm{\delta}_k(\mathbf{S}_n^{(b)})=a_k\bm{\mu}_n(\mathbf{S}_n^{(b)})+2b_k\mathbf{V}_n(\mathbf{S}_n^{(b)})\mathbf{w}_k(\mathbf{S}_n^{(b)})$ for $b=1, \dots,B$,
				\item[]
				\begin{enumerate}
					\item[]Compute $U_{\delta} = \frac{1}{B}\sum_{b=1}^{B}\bm{\delta}_k(\mathbf{S}_n^{(b)})^T\bm{\mu}_n(\mathbf{S}_n^{(b)})$,
					\item[] $V_{\delta} = \frac{1}{B}\sum_{b=1}^{B}(2\bm{\delta}_k (\mathbf{S}_n^{(b)})^T$ $\mathbf{V}_n(\mathbf{S}_n^{(b)})\mathbf{w}_k(\mathbf{S}_n^{(b)}))$,
					\item[] $V_{\delta\delta} = \frac{1}{B}\sum_{b=1}^{B}\bm{\delta}_k(\mathbf{S}_n^{(b)})^T\mathbf{V}_n(\mathbf{S}_n^{(b)}) \bm{\delta}_k(\mathbf{S}_n^{(b)})$
				\end{enumerate} 
			\end{enumerate} 
			\item[3.3]
			\begin{enumerate}
				\item[] Choose $t_k$ such that $F(U_k+t_k U_{\delta},V_k+t_k V_{\delta}+t_k^2 V_{\delta\delta})>F(U_k,V_k)$
			\end{enumerate}
			
			\item[3.4] Update $U_{k+1}=U_k+t_k U_{\delta}$, $V_{k+1}= V_k+t_k V_{\delta}+t_k^2V_{\delta\delta}$, $\mathbf{y}_k=\mathbf{w}_k+t_k\bm{\delta}_k$, $\mathbf{w}_{k+1}(\mathbf{S}_n)=P_\Omega\mathbf{y}_k(\mathbf{S}_n)$
		\end{enumerate}
		\Ensure $\mathbf{a}=(a_0,a_1, \dots,a_{k-1})$, $\mathbf{b}=(b_0,b_1, \dots,b_{k-1})$, and $\mathbf{t}=(t_0,t_1, \dots,t_{k-1})$.
		
	\end{algorithmic}
\end{breakablealgorithm}

\section{Implementation in practice}\label{sec5}

The convergence in Theorems \ref{convergence} and \ref{thm2} requires the expected values in Algorithms \ref{multi-step} and \ref{Omegaalgo} to be exactly evaluated. However those expected values cannot be exactly computed in practice due to two issues. First, the true model for $\mathbf{r}_{n+1}$ and its past values $\mathbf{S}_n$ is unknown. The conditional expectations $\bm{\mu}_n(\mathbf{S}_n)=E(\mathbf{r}_{n+1}\mid\mathbf{S}_n)$ and $\mathbf{V}_n(\mathbf{S}_n)=E(\mathbf{r}_{n+1}\mathbf{r}^T_{n+1}\mid\mathbf{S}_n)$ can only be estimated by the observed samples $\mathbf{S}_n$ together with some prior knowledge about $\mathbf{r}_{n+1}$. Second, even if the model for $\mathbf{r}_{n+1}$ is known, the expected values of $\mathbf{w}_k(\mathbf{S}_n)^T\bm{\mu}_n(\mathbf{S}_n)$ and $\mathbf{w}_k(\mathbf{S}_n)^T\mathbf{V}_n(\mathbf{S}_n)\mathbf{w}_k(\mathbf{S}_n)$ can be very difficult to compute exactly since $\mathbf{w}_k$ can be a complicated function. The first issue can be addressed by conducting time series analyses on $\mathbf{r}_t$ to choose a good model and fit model parameters. Many tools have been developed for such time series analyses (e.g. see \citet{lai2008statistical}). In this paper, we simply consider the AR(1) model \eqref{eq10} to allow wrong model fitting to see the robustness of our approach in the simulations and the empirical study. For the second issue, we address it by using bootstrap resampling to estimate the expected values.

\subsection{$\mathbf{S}_n^{(b)}$ generation}\label{Sn generation}
The traditional bootstrap samples $\mathbf{S}_n^{(b)}= \{\mathbf{r}_1^{(b)},\dots,$ $\mathbf{r}_n^{(b)}\}$ by drawing with replacement from the observed sample $\mathbf{S}_n=\{\mathbf{r}_1,\dots,\mathbf{r}_n\}$. However, such resampling ignores the correlation between $\mathbf{r}_t$ and $\mathbf{r}_{t+h}$ for small lag $h\geq1$. To solve this issue, it is suggested to group $\{\mathbf{r}_1,\dots,\mathbf{r}_n\}$ into blocks that contain consecutive $\mathbf{r}_t$ and sample on those blocks so that certain level of correlation among the resampled returns is remained. This idea is called block bootstrap. \citet{hall1985resampling} and \citet{kunsch1989jackknife} introduce the block bootstrap as a nonparametric extension of the bootstrap for handling dependent data. Following their work, further studies and modifications of block bootstrap have been proposed; see \citet{buhlmann1997sieve}, \citet{paparoditis2002local} and \citet{lee2009double}. We apply the double block bootstrap proposed by \citet{lee2009double} for $\mathbf{S}_n^{(b)}$ generation in our simulations and the empirical study due to its simplicity. Another approach to handle dependent data is by parametric bootstrap. For example, if we know that $\mathbf{r}_t$ follow AR(1) model \eqref{eq10} with $\bm{\varepsilon}_t= (\varepsilon_{1,t},\dots,\varepsilon_{p,t})^T$ are independent, then we can apply traditional bootstrap to sample $\{\bm{\varepsilon}_1^{(b)},\dots,\bm{\varepsilon}_n^{(b)}\}$ by drawing with replacement from $\{\hat{\bm{\varepsilon}}_t(\mathbf{S}_n),t=1,\dots,n\}$ and generate $\mathbf{S}_n^{(b)} = \{\mathbf{r}_t^{(b)}=(r_{1,t}^{(b)},\dots,r_{p,t}^{(b)})^T, t=1,\dots,n \}$ with
\begin{equation}\label{eq12}
\begin{aligned}
r^{(b)}_{i,t}=\hat{\alpha}_i(\mathbf{S}_n)+\hat{\beta}_i(\mathbf{S}_n)r^{(b)}_{i,t-1}+\varepsilon^{(b)}_{i,t},
\end{aligned}
\end{equation}
where $\varepsilon^{(b)}_{i,t}$ is the $i^{th}$ entry of $\bm{\varepsilon}_t^{(b)}$ and $r^{(b)}_{i,0}=r_{i,0}=0$. In the case that $\bm{\varepsilon}_t$ are not independent, as pointed out in \citet{engle1986modelling}, we can apply the double block bootstrap to generate $\bm{\varepsilon}_t^{(b)}$ with $\mathbf{S}_n$ replaced by $\{\hat{\bm{\varepsilon}}_t(\mathbf{S}_n)\}$, and then generate $\mathbf{S}_n^{(b)}$ by \eqref{eq12}.

\section{Simulation studies}\label{sec6}
In this section, we illustrate the performance of Algorithm \ref{Omegaalgo} through simulations under various settings. For all the simulations, we choose the number $B$ of resampling to be $60$. Without assuming the true data generating mechanism of simulated returns $\mathbf{r}_t$, we use $\hat{\bm{\mu}}(\mathbf{S}_n)$ and $\hat{\mathbf{V}}(\mathbf{S}_n)$ that are based on the AR(1) model \eqref{eq10} as described in the first paragraph of Section \ref{sec3} to replace $\bm{\mu}_n(\mathbf{S}_n)$ and $\mathbf{V}_n(\mathbf{S}_n)$ in Algorithm \ref{Omegaalgo}. The initial function $\mathbf{w}_0$ is chosen to be the plug-in solution of problem \eqref{eq2} with $\bm{\mu}$ and $\mathbf{V}$ replaced by sample mean and sample second moment, which are common choices for weakly stationary returns in real applications. We generate returns $
r_{i,t}, i=1,\dots,20$, based on model \eqref{eq10} with different choices of $\alpha_i$, $\beta_i$ and $\varepsilon_{i,t}$:
\begin{enumerate}
	\item[(AR)] Assume $\varepsilon_{i,t}\sim\mathcal{N}(0,\sigma^2)$ independently. Set $\alpha_i=0.005$, $\beta_i=-0.4$, and $\sigma=0.04$ for all $i$. The values of the parameters are chosen by fitting the AR(1) model to the excess returns of Amazon (AMZN.O) over S\&P500, which is described in Section \ref{sec7}, in the period of January 2019 to December 2019. With these choices, the unconditional mean and variance of $r_{i,t}$ are $\alpha_i/(1-\beta_i)=0.0036$ and $\sigma^2=0.0016$.
	\item[(IID)] Assume $\varepsilon_{i,t}\sim\mathcal{N}(0,\sigma^2)$ independently. Set $\beta_i=0$ so that $r_{i,t}$ are independent. We choose $\alpha_i=0.0036$ so that the unconditional mean and variance are the same as in the Setting AR.
	\item[(GARCH)] Assume $\varepsilon_{i,t}=\sigma_{i,t}z_{i,t}$, where $\sigma_{i,t}^2=\gamma_0 +\gamma_1 \sigma_{i,t-1}^2+\gamma_2 \varepsilon_{i,t-1}^2$ and $z_{i,t} \sim \mathcal{N}(0,1)$ independently. It is the AR(1)-GARCH(1,1) model described in Section 6.3 of \citet{lai2008statistical}. Set $\alpha_i=0.005$, $\beta_i=-0.4$,$\gamma_1=\gamma_2=0.2$ and $\gamma_0=\sigma^2(1-(\gamma_1+\gamma_2))=0.00096$ so that the unconditional mean and variance are the same as in Settings AR and IID.
\end{enumerate}

The first 60 generated samples are training samples, and we construct portfolios $\mathbf{w}$ with portfolio returns $\mathbf{w}^T\mathbf{r}_t$, for $t=61,\dots,80$, so that the objective function $G(\mathbf{w})=F(E(\mathbf{w}^T\mathbf{r}_t),E((\mathbf{w}^T\mathbf{r}_t)^2))$ in \eqref{eq1} is maximum. We consider the mean-variance (MV) formulation \eqref{eq7}, the Sharpe ratio (SR) formulation \eqref{eq8} and the mean-standard-derivation (Msd) formulation \eqref{eq9} with $\lambda$ in \eqref{eq7} and \eqref{eq9} chosen to be $z_{0.9}$ or $0.1 z_{0.9}$. The constraint set is $\Omega=\{\mathbf{w}:\mathbf{w}^T\mathbf{1}=1,w_i\geq LB\}$ for $LB=-0.2$ or $LB=-1$. For portfolio $\mathbf{w}$ construction, we compare three approaches:
\begin{enumerate}
	\item Baseline approach. We set $w_i=1/20$ for all $i$. Note that the unconditional mean and variance for each $r_{i,t}$ are identical, and $r_{i,t}$ and $r_{j,t}$ are independent for $i\neq j$. Therefore, in the case of IID where the solution is constant, $E(\mathbf{w}^T\mathbf{r}_t)=\mathbf{w}^T E(\mathbf{r}_t)=\mathbf{w}^T\mathbf{1}E(r_{i,t})=0.0036$ does not depend on $\mathbf{w}$ and $Var(\mathbf{w}^T\mathbf{r}_t)=\mathbf{w}^T\Sigma\mathbf{w}=\sigma^2\sum_{i=1}^{20}w_i^2$ is minimum when $w_i=1/20$, which implies that this baseline approach leads to the true solution of \eqref{eq1} in IID case. Note that if we consider $\mathbf{w}$ as a constant vector and get to problem \eqref{eq2}, since $\bm{\mu}$ and $\mathbf{V}$ are the same in all settings, the baseline solution is also the solution of \eqref{eq2} in AR and GARCH settings. Let $r_t^{BL}=\sum_{i=1}^{20}r_{i,t}/20$ be the portfolio return for baseline approach at time $t=61,\dots,80$, and $\hat{G}^{BL}=F(\frac{1}{20}\sum_{t=1}^{20}r_{20+t}^{BL},\frac{1}{20}\sum_{t=1}^{20}(r_{20+t}^{BL})^2)$ as an estimate of $G(\mathbf{w}_{BL})$,where $\mathbf{w}_{BL}=\mathbf{1}/20$.
	\item Plug-in approach. For $61\leq t\leq 80$, we solve problem \eqref{eq2} with $\bm{\mu}$ and $\mathbf{V}$ replaced by the sample mean $\hat{\bm{\mu}}$ and sample second moment $\hat{\mathbf{V}}$ based on the set $\mathbf{S}_{t-1}=\{\mathbf{r}_{t-60},\dots,\mathbf{r}_{t-1}\}$. Let $\mathbf{w}_{PI}(\mathbf{S}_{t-1})$ be the solution at time $t$, $r_t^{PI}=\mathbf{w}_{PI}(\mathbf{S}_{t-1})^T\mathbf{r}_t$ and $\hat{G}^{PI}=F(\frac{1}{20}\sum_{t=1}^{20}r_{20+t}^{PI},\frac{1}{20}\sum_{t=1}^{20}(r_{20+t}^{PI})^2)$.
	\item Functional approach. We apply Algorithm \ref{Omegaalgo} on the training samples $\mathbf{r}_1,\dots,\mathbf{r}_{60}$ to get the outputs $\mathbf{a}$, $\mathbf{b}$ and $\mathbf{t}$. Then, for a given $t$, we compute $\mathbf{w}_{fun}(\mathbf{S}_{t-1})$ by \eqref{eq14} with $\mathbf{w}_0(\mathbf{S}_{t-1})=\mathbf{w}_{PI}(\mathbf{S}_{t-1})$. Let $r_t^{fun}=\mathbf{w}_{fun}(\mathbf{S}_{t-1})^T\mathbf{r}_t$ and $\hat{G}^{fun}=F(\frac{1}{20}\sum_{t=1}^{20}r_{20+t}^{fun},\frac{1}{20}\sum_{t=1}^{20}(r_{20+t}^{fun})^2)$.
\end{enumerate}
For each setting and each method, we repeat the simulations $100$ times independently to get $\{\hat{G}^{BL}(\ell),\hat{G}^{PI}(\ell)$, $\hat{G}^{fun}(\ell)\}$, $\ell=1,\dots,100$. Define $\Delta_{PI}=\frac{1}{100}\sum_{\ell=1}^{100}(\hat{G}^{PI}(\ell)-\hat{G}^{BL}(\ell))$ and $\Delta_{fun}=\frac{1}{100}\sum_{\ell=1}^{100}(\hat{G}^{fun}(\ell)-\hat{G}^{BL}(\ell))$, we test the null hypothesis $H_0:E(\Delta_{fun}) \leq E(\Delta_{PI})$ by applying one-sided pair t-test on $\{\hat{G}^{PI}(\ell),\hat{G}^{fun}(\ell)\}$, $\ell=1,\dots,100$. The p-values are presented in Table \ref{simuIID},\ref{simuAR} and \ref{simuGARCH}, which also report the number $n_+$ of times that $\hat{G}^{fun}(\ell)>\hat{G}^{PI}(\ell)$ and the number $n_0$ of times that $\hat{G}^{fun}(\ell) = \hat{G}^{PI}(\ell)$ in $100$ simulations.

\begin{table}[h]
	\caption{Comparison of the performance of Algorithm \ref{Omegaalgo} with plug-in and baseline approaches under IID setting. The values in parentheses under $\Delta_{PI}$ and $\Delta_{fun}$ are standard deviations.}
	\label{simuIID}
	\centering
	\small
	\resizebox{\textwidth}{!}{
		\begin{tabular}{cc|cccc|cccc}
			\hline
			 &    & \multicolumn{4}{c|}{LB=-0.2} & \multicolumn{4}{c}{LB=-1} \\
			$F$ & $\lambda$ & $\Delta_{PI}$ & $\Delta_{fun}$ & p-value & $(n_+,n_0)$ & $\Delta_{PI}$ & $\Delta_{fun}$ & p-value & $(n_+,n_0)$ \\
			\hline
			 SR &    & -0.2 (0.3) & -0.3 (0.3) & 1.00  & (26,0) & -0.2 (0.2) & -0.3 (0.3) & 1.00  & (32,0) \\
			\multirow{2}[0]{*}{MV} & $0.1z_{0.9}$ & -3e-3 (3e-2) & -3e-3 (3e-2) & 0.50  & (47,0) & -2e-2 (9e-2) & -2e-2 (9e-2) & 0.58  & (48,0) \\
			& $z_{0.9}$ & -1e-2 (2e-2) & -1e-2 (2e-2) & 0.99  & (37,0) & -5e-2 (5e-2) & -6e-2 (6e-2) & 1.00  & (31,0) \\
		    \multirow{2}[1]{*}{Msd} & $0.1z_{0.9}$ & -9e-3 (2e-2) & -1e-2 (2e-2) & 0.94  & (45,0) & -4e-2 (8e-2) & -4e-2 (7e-2) & 0.90  & (49,0) \\
			& $z_{0.9}$ & -5e-3 (4e-3) & -4e-2 (2e-2) & 1.00  & (0,0) & -5e-3 (3e-3) & -2e-2 (3e-2) & 1.00  & (0,84) \\
			\hline
	\end{tabular}
}%
\end{table}%

\begin{table}[h]
	\caption{Comparison of the performance of Algorithm \ref{Omegaalgo} with plug-in and baseline approaches under AR setting. The values in parentheses under $\Delta_{PI}$ and $\Delta_{fun}$ are standard deviations.}
	\label{simuAR}%
	\centering
	\small
	\resizebox{\textwidth}{!}{
		
		\begin{tabular}{cc|cccc|cccc}
			\hline
			 &    & \multicolumn{4}{c|}{LB=-0.2} & \multicolumn{4}{c}{LB=-1} \\
			$F$ & $\lambda$ & $\Delta_{PI}$ & $\Delta_{fun}$ & p-value & $(n_+,n_0)$ & $\Delta_{PI}$ & $\Delta_{fun}$ & p-value & $(n_+,n_0)$ \\
			\hline
			 SR &    & -0.3 (0.2) & 0.8 (0.4) & 2e-52 & (100,0) & -0.3 (0.2) & 1 (0.4) & 3e-56 & (100,0) \\
			\multirow{2}[0]{*}{MV} & $0.1z_{0.9}$ & -2e-2 (2e-2) & 3e-2 (2e-2) & 5e-55 & (100,0) & -9e-2 (8e-2) & 2e-2 (6e-2) & 1e-60 & (100,0) \\
			& $z_{0.9}$ & -2e-2 (1e-2) & 4e-2 (2e-2) & 4e-58 & (100,0) & -8e-2 (4e-2) & 5e-2 (4e-2) & 1e-65 & (100,0) \\
			\multirow{2}[1]{*}{Msd} & $0.1z_{0.9}$ & -3e-2 (2e-2) & 3e-2 (2e-2) & 9e-55 & (100,0) & -0.1 (6e-2) & 9e-4 (5e-2) & 4e-60 & (100,0) \\
			& $z_{0.9}$ & -4e-3 (3e-3) & 2e-2 (1e-2) & 3e-38 & (98,0) & -4e-3 (3e-3) & -2e-3 (1e-2) & 3e-3 & (8,92) \\
			\hline
	\end{tabular}
}%

\end{table}%

\begin{table}[htb]
	\caption{Comparison of the performance of Algorithm \ref{Omegaalgo} with plug-in and baseline approaches under GARCH setting. The values in parentheses under $\Delta_{PI}$ and $\Delta_{fun}$ are standard deviations.}
	\label{simuGARCH}%
	\centering
	\small
	\resizebox{\textwidth}{!}{
		\begin{tabular}{cc|cccc|cccc}
			\hline
			 &    & \multicolumn{4}{c|}{LB=-0.2} & \multicolumn{4}{c}{LB=-1} \\
			$F$ & $\lambda$ & $\Delta_{PI}$ & $\Delta_{fun}$ & p-value & $(n_+,n_0)$ & $\Delta_{PI}$ & $\Delta_{fun}$ & p-value & $(n_+,n_0)$ \\
			\hline
			 SR &    & -0.3 (0.2) & 0.8 (0.4) & 2e-46 & (100,0) & -0.3 (0.2) & 0.9 (0.5) & 2e-44 & (99,0) \\
			\multirow{2}[0]{*}{MV} & $0.1z_{0.9}$ & -2e-2 (2e-2) & 3e-2 (2e-2) & 3e-56 & (100,0) & -7e-2 (5e-2) & 3e-2 (4e-2) & 5e-61 & (100,0) \\
			& $z_{0.9}$ & -2e-2 (1e-2) & 3e-2 (1e-2) & 2e-63 & (100,0) & -6e-2 (3e-2) & 5e-2 (3e-2) & 6e-64 & (100,0) \\
			\multirow{2}[1]{*}{Msd} & $0.1z_{0.9}$ & -2e-2 (1e-2) & 3e-2 (1e-2) & 6e-61 & (100,0) & -9e-2 (5e-2) & 2e-2 (5e-2) & 4e-59 & (100,0) \\
			& $z_{0.9}$ & -3e-3 (3e-3) & 1e-2 (1e-2) & 2e-31 & (94,0) & -4e-3 (3e-3) & 2e-3 (1e-2) & 1e-7 & (33,65) \\
			\hline
	\end{tabular}
}%

\end{table}%

Table \ref{simuIID} shows that the baseline solution outperforms the plug-in and our proposed solutions. All settings have negative $\Delta_{PI}$ or $\Delta_{fun}$ over the baseline solution. It is consistent with the fact that the baseline solution is the exact solution of \eqref{eq1} in the IID case. Since the solution is a constant vector, it is not surprising that the plug-in method, which solves \eqref{eq1} by treating $\mathbf{w}$ as constant, performs better than our functional method. However, in the case of AR in Table \ref{simuAR} or GARCH in Table \ref{simuGARCH}, the functional approach performs much better than the plug-in approach. We have $\Delta_{PI}$ all negative for the plug-in solution. It is because baseline solution is the solution of \eqref{eq2} with exact $\bm{\mu}$ and $\bm{\Sigma}$ while plug-in solution solves the same problem \eqref{eq2} with $\bm{\mu}$ and $\bm{\Sigma}$ replaced by sample estimates. On the other hand, we have $\Delta_{fun}$ nearly all positive. The only case that $\Delta_{fun}<0$ has mean ($-0.002$) much less than the corresponding standard deviation ($0.01$). The proposed approach can outperform the oracle (baseline) approach because the solution of \eqref{eq1} is a function of past returns. We observe the similar results for the case of GARCH in Table \ref{simuGARCH}.

Although the proposed solution performs still better than plug-in solution in the settings of Msd with $\lambda=z_{0.9}$ under both AR and GARCH models, the improvement is less significant than that in other settings. It can be explained by the difficulty of variance (or second moment) estimation. It is known that $\bm{\Sigma}$ estimation is more difficult than $\bm{\mu}$ estimation. Comparing with other settings, the variance $\mathbf{w}^T\bm{\Sigma}\mathbf{w}$ has higher impact to the objective function in the Msd problem with $\lambda=z_{0.9}$ either because of larger $\lambda$ or taking square root of the variance. Hence the biases of variance estimation lead to worse performance of the functional approach in the settings of Msd with $\lambda=z_{0.9}$ than that in other settings.

\section{An empirical study}\label{sec7}

In this section, we describe an empirical study to illustrate the application of our proposed method for portfolio management. The study uses monthly stock market prices, including S{\&}P500 and its components, from January 2000 to December 2019, which are obtained from the NM FinTech database. As we can see from Figure \ref{SP}, there are several highly volatile times in the stock market in this period, including the internet bubble in the early 2000s, the recession in late 2000s, the European sovereign debt crisis that began in 2009, the Chinese stock market crash in 2015, and the Turkish currency and debt crisis in 2018. The prices are converted to log-returns by the formula $r_{i,t}=\log (P_{i,t}/P_{i,t-1})$ for the $i^{th}$ time series in the data set. We use the first ten years, from January 2000 to December 2009, data as a training data to construct portfolio weighting vector $\mathbf{w}$ by plug-in and the functional approaches as in Section \ref{sec6} and compute the return for the month of January 2010. The trading strategy is repeated in the test period, from January 2010 to December 2019 in the following manner. For a particular month $t$ in the test period, we select $m = 50$ stocks with the largest market values in the beginning of month $t$ among those that have no missing monthly prices in the previous 120 months. The log-returns of those selected stocks in the past ten years before $t$ are used as the training set $\{\mathbf{r}_{t-119},\ldots,\mathbf{r}_{t-1}\}$. The portfolios for month $t$ to be considered in the empirical study are formed from these $m$ stocks. Note that the stocks represented in $\mathbf{r}_t$ may be varying for each month $t$. Therefore our proposed approach, Algorithm \ref{Omegaalgo}, is applied for each month in the test period to update the vectors $\mathbf{a}$, $\mathbf{b}$, and $\mathbf{t}$.
\begin{figure*}[t] 
	\begin{center}
		\includegraphics[width=1\textwidth]{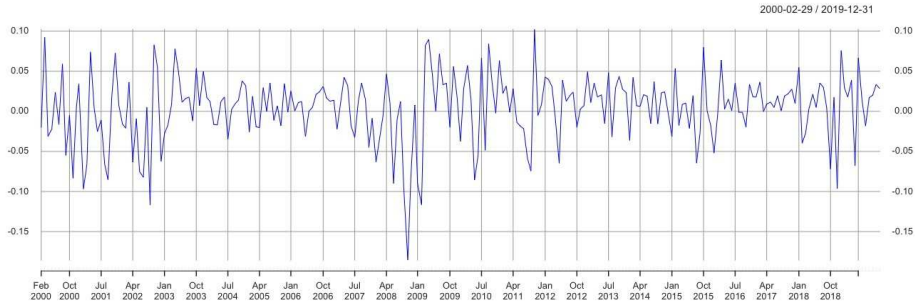}
		\caption{Monthly Returns of SP500}
		\label{SP}
	\end{center}
	
\end{figure*}

S{\&}P500 is a capitalization-weighted index whose components are weighted according to the total market value of their outstanding shares. To compare the returns of a portfolio with the returns of S{\&}P500, we consider the excess returns $e_{i,t}$ over the S{\&}P500 Index $u_t$. With the assumption that $\sum_{i=1}^p w_i=1$, we have $\sum_{i=1}^p w_i(\mathbf{S}_n)r_{i,n+1}-u_{n+1}=\sum_{i=1}^p w_i(\mathbf{S}_n)e_{i,n+1}$. We construct portfolio to maximize the information ratio $\frac{E(\mathbf{w}(\mathbf{S}_n)^T\mathbf{e}_{n+1})}{\sqrt{\text{Var}(\mathbf{w}(\mathbf{S}_n)^T\mathbf{e}_{n+1})}},$ which is of the form of problem formulation \eqref{eq8} with $r_0=0$. Here $\mathbf{e}_t=(e_{1,t},\ldots,e_{p,t})^T$, $\mathbf{S}_n=\{\mathbf{e}_1,\ldots,\mathbf{e}_n\}$, $\bm{\mu}_n(\mathbf{S}_n)=E(\mathbf{e}_{n+1}\mid \mathbf{S}_n)$ and $\mathbf{V}_n(\mathbf{S}_n)=E(\mathbf{e}_{n+1}\mathbf{e}_{n+1}^T\mid \mathbf{S}_n)$. Algorithm \ref{Omegaalgo} requires $\bm{\mu}_n$ and $\mathbf{V}_n$ to be known, and thus we need to specify time series models for the excessive returns. The top panel of Figure \ref{AMZN} gives the time series plot of excessive returns of Amazon (AMZN.O) and the bottom panel gives the autocorrelation functions of the excessive returns. They suggest that there are correlations between the lags of the excessive returns. We apply Ljung-Box test to test the autocorrelations of lags up to 12 months of the top 10 stocks that have largest market value on January 2019. The $p$-values, which are presented in Table \ref{boxtest}, show that the i.i.d.\ assumption for $e_{i,t}$ does not hold. Therefore, estimating $\bm{\mu}_n(\mathbf{S}_n)$ and $\mathbf{V}_n(\mathbf{S}_n)$ by the sample mean and second moment may not be good choices. Further time-series analysis should be conducted to get better estimators for $\bm{\mu}_n(\mathbf{S}_n)$ and $\mathbf{V}_n(\mathbf{S}_n)$. Here we simply assume the simple AR(1) model \eqref{eq10} with the returns $r_{i,t}$ replaced by excessive returns $e_{i,t}$, i.e., $e_{i,t}=\alpha_i+\beta_i e_{i,t-1}+\varepsilon_{i,t}$, and we use $\hat{\bm{\mu}}(\mathbf{S}_n)=(\hat{\alpha}_i(\mathbf{S}_n)+\hat{\beta}_i(\mathbf{S}_n)r_{i,n})_{i=1,\dots,p}$ and $\hat{\mathbf{V}}(\mathbf{S}_n)=\frac{1}{n}\sum_{t=1}^{n}\hat{\bm{\varepsilon}}_t(\mathbf{S}_n)\hat{\bm{\varepsilon}}_t(\mathbf{S}_n)^T+\hat{\bm{\mu}}(\mathbf{S}_n)\hat{\bm{\mu}}(\mathbf{S}_n)^T$ to estimate $\bm{\mu}_n(\mathbf{S}_n)$ and $\mathbf{V}_n(\mathbf{S}_n)$ as described in Section \ref{sec3}.

\begin{figure}[htbp] 
	\begin{center}
		\includegraphics[width=1\textwidth]{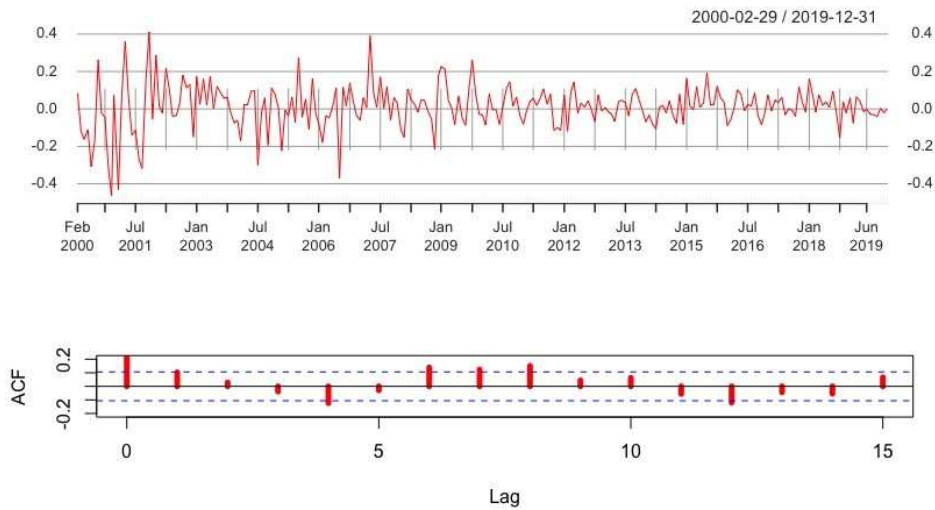}		
		\caption{Excess Returns of Amazon}
		\label{AMZN}
	\end{center}
	
\end{figure}

\begin{table}[htbp]
	\caption{Results of Ljung-Box test}
	\label{boxtest}
	\centering
	\small
	\resizebox{\textwidth}{!}{
		\begin{tabular}{c|cccccccccc}
			\hline
			 Stock & \multicolumn{1}{c}{MSFT.O} & \multicolumn{1}{c}{AAPL.O} & \multicolumn{1}{c}{AMZN.O } & \multicolumn{1}{c}{JNJ.N } & \multicolumn{1}{c}{JPM.N} & \multicolumn{1}{c}{XOM.N} & \multicolumn{1}{c}{BRK\_B.N} & \multicolumn{1}{c}{WMT.N } & \multicolumn{1}{c}{PFE.N  } & \multicolumn{1}{c}{BAC.N} \\
			\hline
			 p-values & 5.07E-01 & 5.31E-01 & 1.32E-02 & 6.05E-01 & 1.90E-05 & 6.60E-01 & 9.83E-01 & 1.19E-03 & 2.29E-01 & 6.22E-02 \\
			\hline
	\end{tabular}
}%
	
\end{table}%

Similar to the notations in Section \ref{sec6}, we let $\mathbf{w}_{PI}(\mathbf{S}_t)$ be the portfolio generated by the plug-in approach with an input $\mathbf{S}_t$, and $\mathbf{w}_{fun}(\mathbf{S}_t)$ be the portfolio generated by the the functional approach with an input $\mathbf{S}_t$ and the initial function $\mathbf{w}_0=\mathbf{w}_{PI}$. Let $e_t^{PI}=\mathbf{w}_{PI}(\mathbf{S}_{t-1})^T\mathbf{e}_t$ and $e_t^{fun}=\mathbf{w}_{fun}(\mathbf{S}_{t-1})^T\mathbf{e}_t$ be the excess returns of the portfolios $\mathbf{w}_{PI}(\mathbf{S}_t)$ and $\mathbf{w}_{fun}(\mathbf{S}_t)$ over S{\&}P500 respectively. As $t$ varies over the monthly test period from January 2000 to December 2019, we add up the realized excess returns to give the cumulative excess return $\sum_{\ell=1}^t e_{\ell}$ up to time $t$. Figures \ref{empf} gives the time series plots of the cumulative excess returns of plug-in and functional portfolios over S{\&}P500 for the constraint set $\Omega$ with $LB$ equals to -0.2 and -1 respectively. The plots show that the functional approach improves the cumulative excess return over the plug-in approach. Since we choose portfolio to maximize the information ratio, we also use the realized information ratio $R_e=\bar{e}/s_e$, where $\bar{e}$ is the sample average of the monthly excess returns and $s_e$ is the corresponding sample standard deviation, to compare the performance of $\mathbf{w}_{PI}$ and $\mathbf{w}_{fun}$.  Table \ref{empresults} shows that the information ratio $R_e^{fun}$ for functional approach is better than the corresponding ratio $R_e^{PI}$ for plug-in approach in the whole test period. When we partition the whole test period into five 2-year intervals, we still see that $R_e^{fun}$ is higher in most of the intervals.

\begin{figure}[htbp] 
	
	\begin{center}
		\includegraphics[width=1\textwidth]{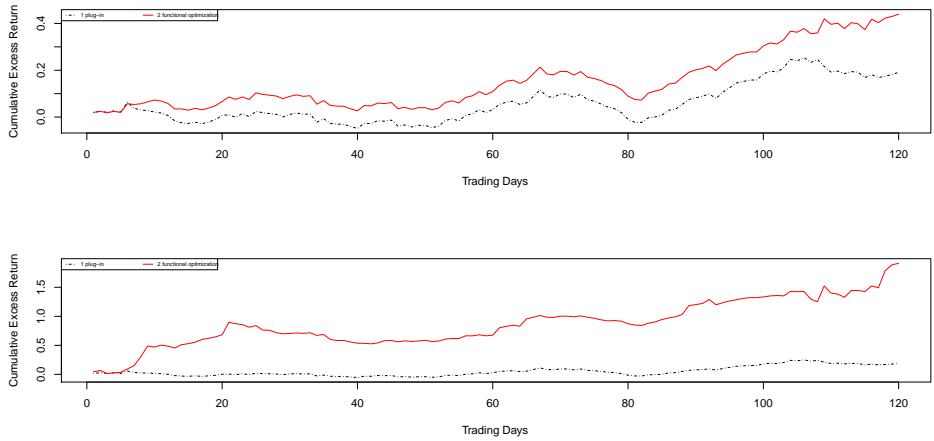}
		
		\caption{The upper panel is the time series plot of realized cumulative excess returns over S\&P500 with the constraints $\sum w_i=1$ and $w_i \geq -0.2$, the lower panel is the time series plot of realized cumulative excess returns over S\&P500 with the constraints $\sum w_i=1$ and $w_i \geq -1$.}
		\label{empf}
	\end{center}
\end{figure}

\begin{table}[htbp]
	\caption{Realized information ratios of portfolios from plug-in approach and functional approach in different intervals.}
	\label{empresults}
	\small
	\centering
	\begin{tabular}{c cc cc}
		\hline
		& \multicolumn{2}{c}{LB=-0.2} & \multicolumn{2}{c}{LB=-1} \\
		Time Interval & $R_{e}^{PI}$ & $R_{e}^{fun}$ & $R_{e}^{PI}$ & $R_{e}^{fun}$ \\
		\hline
		01.2010-12.2011 & 0.003  & 0.25  & -0.01  & 0.50  \\
		01.2012-12.2013 & -0.13  & -0.12  & -0.13  & -0.34  \\
		01.2014-12.2015 & 0.35  & 0.39  & 0.35  & 0.46  \\
		01.2016-12.2017 & 0.16  & 0.21  & 0.16  & 0.27  \\
		01.2018-12.2019 & 0.12  & 0.33  & 0.12  & 0.27  \\
		Overall & 0.10  & 0.22  & 0.10  & 0.26  \\
		\hline
	\end{tabular}%
	
\end{table}%

\section{Concluding remarks}\label{sec8}
Our work in this paper introduces a new approach to solve portfolio optimization by considering the weights as a function of past returns. We have shown that the functional approach can get the solutions that converge to the true solution of the functional optimization problem \eqref{eq1} if the underlying models for the returns are known so that the conditional mean $\bm{\mu}_n$ and conditional second moment $\mathbf{V}_n$ can be exactly evaluated. Although they are unknown in practice, our simulations and empirical results show that our approach can perform significantly better than plug-in approach when the underlying returns are correlated. Indeed our simulation results show that our functional approach can even outperform plug-in approach with true (unconditional) mean and variance. It shows the importance of our work. The majority researches try to find better estimates of unknown mean and variance, but such approaches would not be better than solving the same problem \eqref{eq1} with true mean and variance as long as the weights are considered as fixed variables. Our work shows that treating the weights as functions of past values can lead to a better portfolio. 

The functional approach proposed in this paper can be further improved in three directions. First, and the most important, conducting time-series analysis to choose a good model for the targeted returns. We can only get meaningful updates in our algorithms when the expected values can well estimated; Second, choosing a good initial function $\mathbf{w}_0$. While the expected values cannot be estimated very well, we should only run a few iterations in our algorithms. Hence, the choice of initial function is important. We use sample mean and sample second moment for the plug-in solution as the initial function in the entire paper, but obviously it can be replaced by any other choices, e.g., plug-in approach with the variance estimated based on multifactor models. Third, considering other optimization methods. We consider gradient ascent type optimization in our algorithms. It is possible to use other optimization methods to make the algorithms more efficient, e.g., stochastic gradient ascent for dynamic portfolio optimization.

\appendix
\section{Proof of Theorem $\ref{convergence}$}
 To simplify notations, in this proof, we denote $\langle\mathbf{x},\,\mathbf{y}\rangle=E(\mathbf{x}(\mathbf{S}_n)^T\mathbf{y}(\mathbf{S}_n))$ and $||\mathbf{x}||^2=\langle \mathbf{x},\mathbf{x}\rangle$. By the assumptions in Theorem 1, we assume $\mathbf{w},\mathbf{w}+\bm{\delta} \in B(\varepsilon)$.
 \begin{prop}\label{prop1}
 	Let $\tilde{G}(\mathbf{w})=G(\mathbf{w})+\frac{m}{2}||\mathbf{w}||^2$ and $\tilde{D}_1(\mathbf{w})=D_1(\mathbf{w})+m\mathbf{w}$ under the assumption A1 in Theorem 1, we have 
 	\begin{itemize}
 		\item [($a$)] $\tilde{G}(\mathbf{w}+\bm{\delta})\leq \tilde{G}(\mathbf{w})+\langle\tilde{D}_1(\mathbf{w}),\,\bm{\delta} \rangle-\frac{1}{2(M-m)}||\tilde{D}_1(\mathbf{w}+\bm{\delta})-\tilde{D}_1(\mathbf{w})||^2$
 		\item [($b$)] $\langle\tilde{D}_1(\mathbf{w}+\bm{\delta})-\tilde{D}_1(\mathbf{w}),\bm{\delta}\rangle \leq -\frac{1}{M-m}||\tilde{D}_1(\mathbf{w}+\bm{\delta})-\tilde{D}_1(\mathbf{w})||^2$
 	\end{itemize}
 \end{prop}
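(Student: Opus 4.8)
The plan is to recognize that $\tilde{G}$ and $\tilde{D}_1$ are exactly $G$ and its $\langle\cdot,\cdot\rangle$-gradient $D_1$ shifted by the quadratic regularizer $\frac{m}{2}\|\cdot\|^2$, and that under A1 this shifted functional becomes \emph{concave and $(M-m)$-smooth} on $B(\varepsilon)$. Claims $(a)$ and $(b)$ are then the two classical consequences of smoothness together with concavity---the quadratic gradient bound and co-coercivity of the gradient---transported to the Hilbert space of square-integrable functions under the inner product $\langle\mathbf{x},\mathbf{y}\rangle=E(\mathbf{x}(\mathbf{S}_n)^T\mathbf{y}(\mathbf{S}_n))$. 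Recall from the first-order expansion \eqref{eq4} that $D_1(\mathbf{w})$ is the gradient of $G$; differentiating $\frac{m}{2}\|\mathbf{w}\|^2$ adds $m\mathbf{w}$, so $\tilde{D}_1=D_1+m\,\mathrm{Id}$ is the gradient of $\tilde{G}$.

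The technical heart is a short algebraic reduction. Using $\frac{m}{2}\|\mathbf{w}+\bm{\delta}\|^2-\frac{m}{2}\|\mathbf{w}\|^2-m\langle\mathbf{w},\bm{\delta}\rangle=\frac{m}{2}\|\bm{\delta}\|^2$, I would first record
\[
\tilde{G}(\mathbf{w}+\bm{\delta})-\tilde{G}(\mathbf{w})-\langle\tilde{D}_1(\mathbf{w}),\bm{\delta}\rangle=\big[G(\mathbf{w}+\bm{\delta})-G(\mathbf{w})-\langle D_1(\mathbf{w}),\bm{\delta}\rangle\big]+\tfrac{m}{2}\|\bm{\delta}\|^2,
\]
and then feed the two-sided bound of A1 into the bracket to obtain
\[
-\tfrac{M-m}{2}\|\bm{\delta}\|^2\le\tilde{G}(\mathbf{w}+\bm{\delta})-\tilde{G}(\mathbf{w})-\langle\tilde{D}_1(\mathbf{w}),\bm{\delta}\rangle\le 0 .
\]
The upper bound $\le 0$ is concavity of $\tilde{G}$; the lower bound is smoothness with constant reduced from $M$ to $M-m$.

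With this in hand I would prove $(a)$ by the standard descent-lemma minimization. Set $\phi(\mathbf{z})=-\tilde{G}(\mathbf{z})+\langle\tilde{D}_1(\mathbf{w}),\mathbf{z}\rangle$, which is convex, $(M-m)$-smooth, and stationary at $\mathbf{z}=\mathbf{w}$ since $\nabla\phi(\mathbf{w})=0$. Applying the smoothness upper bound of $\phi$ at $\mathbf{z}=\mathbf{w}+\bm{\delta}$ and minimizing its right-hand side over the step gives $\phi(\mathbf{w})\le\phi(\mathbf{w}+\bm{\delta})-\frac{1}{2(M-m)}\|\nabla\phi(\mathbf{w}+\bm{\delta})\|^2$, where I use $\phi(\mathbf{w})\le\phi(\mathbf{z}')$ at the minimizing point $\mathbf{z}'$ and $\nabla\phi(\mathbf{w}+\bm{\delta})=-(\tilde{D}_1(\mathbf{w}+\bm{\delta})-\tilde{D}_1(\mathbf{w}))$; unwinding $\phi$ yields $(a)$. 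Claim $(b)$ then follows by symmetrization: writing $(a)$ once as stated and once with $\mathbf{w}$ and $\mathbf{w}+\bm{\delta}$ interchanged (increment $-\bm{\delta}$) and adding the two inequalities cancels the $\tilde{G}$ terms and leaves precisely the co-coercivity estimate of $(b)$.

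The main obstacle is that A1 supplies the concavity and smoothness inequalities only on the ball $B(\varepsilon)$, whereas the minimization in $(a)$ evaluates $\phi$ at the auxiliary point
\[
\mathbf{z}'=(\mathbf{w}+\bm{\delta})+\tfrac{1}{M-m}\big(\tilde{D}_1(\mathbf{w}+\bm{\delta})-\tilde{D}_1(\mathbf{w})\big)=(\mathbf{w}+\bm{\delta})+\tfrac{1}{M-m}\big(D_1(\mathbf{w}+\bm{\delta})-D_1(\mathbf{w})+m\bm{\delta}\big),
\]
which lies off the segment from $\mathbf{w}$ to $\mathbf{w}+\bm{\delta}$. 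Both the descent step and the first-order (convexity) comparison $\phi(\mathbf{w})\le\phi(\mathbf{z}')$ are legitimate only if $\mathbf{z}'$ also belongs to $B(\varepsilon)$; since $B(\varepsilon)$ is convex, this then puts the whole relevant segment inside $B(\varepsilon)$. This is exactly what Assumption A3 guarantees for the algorithmic increment $\bm{\delta}=\bm{\delta}(\mathbf{w})$, and the remark following Theorem \ref{convergence} (taking $M$ large so the step, hence $\|\mathbf{z}'-(\mathbf{w}+\bm{\delta})\|$, is small) is what makes A3 tenable. I would therefore state and use $(a)$ and $(b)$ for the increments actually produced by the algorithm and invoke A3 precisely where $\mathbf{z}'$ is required to remain in $B(\varepsilon)$.
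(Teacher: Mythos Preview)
Your proposal is correct and follows essentially the same route as the paper. Both arguments first rewrite A1 as concavity and $(M-m)$-smoothness of $\tilde G$, then prove $(a)$ by evaluating at the same auxiliary point $\mathbf{z}'=\mathbf{w}+\bm{\delta}+\tfrac{1}{M-m}(\tilde D_1(\mathbf{w}+\bm{\delta})-\tilde D_1(\mathbf{w}))$ (the paper writes this as $\mathbf{w}+\bm{\delta}'$ and splits $\tilde G(\mathbf{w}+\bm{\delta})-\tilde G(\mathbf{w})$ through $\tilde G(\mathbf{w}+\bm{\delta}')$; you package the identical two inequalities via the convex surrogate $\phi$), and both obtain $(b)$ by symmetrizing $(a)$. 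Your explicit discussion of why $\mathbf{z}'\in B(\varepsilon)$ is needed and how A3 supplies it is a point the paper's proof leaves implicit.
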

 
 \begin{proof}[Proof of Proposition \ref{prop1}]
 	
 	\begin{itemize}
 		\item [($a$)] It is straight forward to see that assumption A1 implies 
 		$$\tilde{G}(\mathbf{w}+\bm{\delta}) \leq \tilde{G}(\mathbf{w})+\langle\tilde{D}_1(\mathbf{w}),\bm{\delta}\rangle ,\quad
 		\tilde{G}(\mathbf{w}+\bm{\delta}) \geq \tilde{G}(\mathbf{w})+\langle\tilde{D}_1(\mathbf{w}),\bm{\delta}\rangle - \frac{M-m}{2}||\bm{\delta}||^2$$
 		Therefore, let $\bm{\delta}^{\prime}=\bm{\delta}+\frac{1}{M-m}(\tilde{D}_1(\mathbf{w}+\bm{\delta})-\tilde{D}_1(\mathbf{w}))$,
 		\begin{align*}
 		\begin{aligned}
 		&\tilde{G}(\mathbf{w}+\bm{\delta})-\tilde{G}(\mathbf{w})\\ =& \tilde{G}(\mathbf{w}+\bm{\delta})-\tilde{G}(\mathbf{w}+\bm{\delta}^{\prime})+\tilde{G}(\mathbf{w}+\bm{\delta}^{\prime})-\tilde{G}(\mathbf{w})\\
 		\leq& -\langle\tilde{D}_1(\mathbf{w}+\bm{\delta}),\bm{\delta}^{\prime}-\bm{\delta}\rangle + \frac{M-m}{2}||\bm{\delta}^{\prime}-\bm{\delta}||^2+\langle\tilde{D}_1(\mathbf{w}),\bm{\delta}^{\prime}\rangle \\
 		=&-\frac{1}{M-m} \langle\tilde{D}_1(\mathbf{w}+\bm{\delta}),\tilde{D}_1(\mathbf{w}+\bm{\delta})-\tilde{D}_1(\mathbf{w})\rangle +\frac{1}{2(M-m)}||\tilde{D}_1(\mathbf{w}+\bm{\delta})\\&-\tilde{D}_1(\mathbf{w})||^2
 		+\langle\tilde{D}_1(\mathbf{w}),\bm{\delta}\rangle +\frac{1}{M-m}\langle\tilde{D}_1(\mathbf{w}),\tilde{D}_1(\mathbf{w}+\bm{\delta})-\tilde{D}_1(\mathbf{w})\rangle\\
 		=&\langle\tilde{D}_1(\mathbf{w}),\bm{\delta}\rangle -\frac{1}{2(M-m)}||\tilde{D}_1(\mathbf{w}+\bm{\delta})-\tilde{D}_1(\mathbf{w})||^2
 		\end{aligned}
 		\end{align*}
 		\item [($b$)] By exchanging the role of $\mathbf{w}$ and $\mathbf{w}+\bm{\delta}$ in ($a$), we have
 		$$\tilde{G}(\mathbf{w})\leq\tilde{G}(\mathbf{w}+\bm{\delta})+\langle\tilde{D}_1(\mathbf{w}+\bm{\delta}),-\bm{\delta}\rangle-\frac{1}{2(M-m)}||\tilde{D}_1(\mathbf{w})-\tilde{D}_1(\mathbf{w}+\bm{\delta})||^2$$
 		By summing up this inequality with the 
 		one in ($a$), we get the inequality ($b$).
 	\end{itemize}	
 \end{proof}
 Substitute $\tilde{D}_1(\mathbf{w})=D_1(\mathbf{w})+m\mathbf{w}$ in Proposition 1 ($b$), we have
 \begin{align*}
 \begin{aligned}
 &\langle D_1(\mathbf{w}+\bm{\delta})-D_1(\mathbf{w})+m\bm{\delta},\bm{\delta}\rangle \leq -\frac{1}{M-m}||D_1(\mathbf{w}+\bm{\delta})-D_1(\mathbf{w})+m\bm{\delta}||^2\\\Leftrightarrow & (M-m)\langle D_1(\mathbf{w}+\bm{\delta})-D_1(\mathbf{w}),\bm{\delta}\rangle + m(M-m)||\bm{\delta}||^2\\&\leq -(||D_1(\mathbf{w}+\bm{\delta})-D_1(\mathbf{w})||^2+2m\langle D_1(\mathbf{w}+\bm{\delta})-D_1(\mathbf{w}),\bm{\delta}\rangle+m^2||\bm{\delta}||^2)\\
 \Leftrightarrow&(m+M)\langle D_1(\mathbf{w}+\bm{\delta})-D_1(\mathbf{w}),\bm{\delta}\rangle \leq -mM||\bm{\delta}||^2-||D_1(\mathbf{w}+\bm{\delta})-D_1(\mathbf{w})||^2\\
 \Leftrightarrow&\langle D_1(\mathbf{w}+\bm{\delta})-D_1(\mathbf{w}),\bm{\delta}\rangle \leq-(\frac{mM}{m+M}||\bm{\delta}||^2+\frac{1}{m+M}||D_1(\mathbf{w}+\bm{\delta})-D_1(\mathbf{w})||^2)
 \end{aligned}
 \end{align*} Therefore, \begin{align*}
 \begin{aligned}
 &||\mathbf{w}_{k+1}-\mathbf{w^*}||^2\\
 =&||\mathbf{w}_k+\frac{2}{M+m}\mathbf{P}D_1(\mathbf{w}_k)-\mathbf{w^*}||^2\\
 =&||\mathbf{P}(\mathbf{w}_k-\mathbf{w^*})+\frac{2}{M+m}\mathbf{P}(D_1(\mathbf{w}_k)-D_1(\mathbf{w^*}))||^2\\
 =&||\mathbf{P}(\mathbf{w}_k-\mathbf{w^*}+\frac{2}{M+m}(D_1(\mathbf{w}_k)-D_1(\mathbf{w^*})))||^2\\
 \leq& ||\mathbf{w}_k-\mathbf{w^*}+\frac{2}{M+m}(D_1(\mathbf{w}_k)-D_1(\mathbf{w^*}))||^2\\
 =&||\mathbf{w}_k-\mathbf{w^*}||^2+\frac{4}{M+m}\langle D_1(\mathbf{w}_k)-D_1(\mathbf{w^*}),\mathbf{w}_k-\mathbf{w^*}\rangle+\frac{4}{(M+m)^2}||D_1(\mathbf{w}_k)-D_1(\mathbf{w^*})||^2\\
 \leq& ||\mathbf{w}_k-\mathbf{w^*}||^2-\frac{4mM}{(m+M)^4}||\mathbf{w}_k-\mathbf{w^*}||^2-\frac{4}{(M+m)^2}||D_1(\mathbf{w}_k)-D_1(\mathbf{w^*})||^2\\ &+\frac{4}{(M+m)^2}||D_1(\mathbf{w}_k)-D_1(\mathbf{w^*})||^2\\
 =&(\frac{M-m}{M+m})^2||\mathbf{w}_k-\mathbf{w}||^2
 \end{aligned}
 \end{align*} 
 $\therefore \quad ||\mathbf{w}_k-\mathbf{w^*}||\leq\frac{M-m}{M+m}||\mathbf{w}_{k-1}-\mathbf{w^*}||\leq(\frac{M-m}{M+m})^k||\mathbf{w}_0-\mathbf{w^*}||$

\section{Proof of Theorem \ref{thm2}}
 Note that Proposition \ref{prop1} and the arguments in the first part of the proof of Theorem \ref{convergence} still hold. So we just need to show inequality \eqref{eq17} directly. With $t_k = \frac{2}{M+m}$, we have
 \begin{align*}
 \begin{aligned}
 ||\mathbf{w}_{k+1}-\mathbf{w^*}||^2 &=||P_\Omega(\mathbf{w}_k+tD_1(\mathbf{w}_k))-P_\Omega(\mathbf{w^*}+tD_1(\mathbf{w^*}))||^2\\ 
 &\leq||\mathbf{w}_k-\mathbf{w^*}+\frac{2}{M+m}(D_1(\mathbf{w}_k-D_1(\mathbf{w^*})))||^2\\
 &\leq(\frac{M-m}{M+m})^2||\mathbf{w}_k-\mathbf{w^*}||^2
 \end{aligned}
 \end{align*}


\end{document}